\newtheorem{theorem}{Theorem}
\newtheorem{rational for conjecture}{Rational for Conjecture}
\newcommand{\INPUT}{\textbf{Input: }}
\newcommand{\METHOD}{\textbf{Method: }}
\begin{document}

\LARGE

\begin{center}
\noindent \textbf{Parameter Estimation in Abruptly Changing Dynamic Environments}  
\end{center}

\Large

\begin{center}
  Hugo Lewi Hammer and Anis Yazidi  
\end{center}
\large
\begin{center}
  OsloMet - Oslo Metropolitan University\\
  Norway
\end{center}

\normalsize

\begin{abstract}
Many real-life dynamical systems change abruptly followed by almost stationary periods. In this paper, we consider streams of data with such abrupt behavior and investigate the problem of tracking their statistical properties in an online manner.

We devise a tracking procedure where an estimator that is suitable for a stationary environment is combined together with an event detection method such that the estimator rapidly can jump to a more suitable value if an event is detected. Combining an estimation procedure with detection procedure is commonly known idea in the literature. However, our contribution lies in building the detection procedure based on the difference between the stationary estimator and a Stochastic  Learning  Weak  Estimator (SLWE). The SLWE estimator is known to be the state-of-the art approach to tracking properties of non-stationary environments and thus should be a better choice to detect changes in abruptly changing environments than the far more common sliding window based approaches. To the best of our knowledge, the event detection procedure suggested by Ross et al. (2012) \cite{ross2012} is the only procedure in the literature taking advantage of the powerful tracking properties of the SLWE estimator. The procedure in \cite{ross2012} is however quite complex and not well founded theoretically compared to the procedures in this paper. In this paper, we focus on estimation procedure for the binomial and multinomial distributions, but our approach can be easily generalized to cover other distributions as well.

Extensive simulation results based on both synthetic and real-life data related to news classification demonstrate that our estimation procedure is easy to tune and performs well.
\end{abstract}

\section{Introduction}

%Re-written Anis-1
%Estimation is a fundamental and substantial issue in statistical
%problems. Estimators generally fall into various categories
%including the Maximum Likelihood Estimates (MLE) and the Bayesian
%family of estimates. The MLE and Bayesian estimates are well-known
%for having good computational and statistical properties. However,
%the basic premise for establishing the quality of estimates is based
%on the assumption that the parameters being estimated do not change
%with time, i.e, the distribution is assumed to be stationary. Thus, within this premise,
%it is desirable that the estimate converges to the true underlying
%parameter  with probability 1, as the number of samples increases.

%Re-written Anis-1
The Maximum Likelihood Estimates (MLE) as well as the Bayesian
estimation families operate with the premise that the distribution of the data being estimated is stationary over time.
Under such settings, the convergence to the true value of the parameter being estimated takes place with probability 1 when
the number of samples tends to infinity.
However, in many real-life applications, the assumption on the stationarity of the data does not hold and the true underlying parameter being estimated changes over time.
In this paper, we consider the problem of estimating  binomial and multinomial random variables which vary over
time. The Stochastic  Learning  Weak  Estimators  (SLWE) are known to be the state-of-the-art approach for such an estimation problem \cite{OommenRueda06, ZhangOommen11}. The SLWE enjoys a multiplicative update form that makes it superior to the state-of-the-art estimation approaches which are mainly of additive flavor.
However, the right choice of the intrinsic parameter of the SLWE, $\lambda$, is still an open issue. The latter parameter controls the forgetting of old data and controls the ability of the scheme to adapt to changes in the environments. If the system changes rapidly the parameter should be chosen to rapidly forget the old stale data. On the other hand, if the environment is stabilizing, the rate of forgetting should decrease.
%It is worth-mentioning that one of the most appealing properties of the SLWE which makes it the state-of-the-art is its multiplicative form of updates in contrast to classical additive update algorithms.

%Re-written Anis-3
%With regard to their applicability, apart from the problem being of
%importance in its own right, weak estimators admit a growing list of
%applications in various areas such as intrusion detection systems in
%computer networks \cite{Tart006}, spam filtering
%\cite{ZhangOommen11}, ubiquitous computing \cite{May08}, fault
%tolerant routing \cite{Oommen09}, adaptive encoding \cite{Rueda06},
%and topic detection and tracking in multilingual online discussions
%\cite{Stensby13}. We submit that the weak estimators can be used expediently in all of these application domains. Further, such an estimation scheme opens avenues towards a large set of real life applications, as in tracking changing user preferences \cite{OommenAdapt2012} and variations in ubiquitous environments, and as an integral component of a hybrid framework applicable for service recommendation systems \cite{YazidiPref2011}.
The SLWE has found numerous successful applications in the literature. Applications of the SLWE include adaptive classifiers for spam filtering \cite{ZhangOommen11}, adaptive file encoding with nonstationary distributions \cite{Rueda06}, intrusion detection in computer networks \cite{Tart006}, tracking shifts of languages in online discussions \cite{Stensby13},  learning user preferences under concept-shift \cite{OommenAdapt2012, YazidiPref2011}, fault-tolerant routing in Ad-hoc networks \cite{Oommen09}, digital content forensics for detecting illicit images \cite{ibrahim2009detecting}, detection and tracking of malicious nodes in both Ad-hoc networks \cite{rikli2016lightweight}, vehicular mobile WiMAX networks \cite{misra2015selfishness}, and optimizing firewall matching time via dynamic rule ordering \cite{mohan2016dynamic}-- to mention a few.

In many of such practical problems the dynamical system changes abruptly followed by periods where the system is almost stationary. Unfortunately, the SLWE is not well suited for such cases. By choosing a low value of $\lambda$, the estimator will rapidly adjust after an abrupt change, but on the other hand, it will result in a higher estimation uncertainty when the system stabilizes. By choosing a high value of $\lambda$, the estimation uncertainty will be low in the stationary parts, but on the other hand, the estimation procedure will suffer from adjusting too slowly after an abrupt change.

In this paper, we suggest a computationally efficient estimation procedure for a dynamical system that contains both abrupt changes and stationary parts.
%In this paper, we tackle a fundamental and yet unexplored aspect of weak estimation, namely, how to obtain an appropriate choice of the update parameter  $\lambda$ in the weak estimator. Counter-intuitively, the optimal parameter should not favor "forgetting" stale data all the time. It should favor it only after environment switch. However, as the environment stabilizes, less forgetting should take place.
The estimator combines an estimator that is suitable for the stationary parts together with an event detection procedure. When an abrupt change is detected, the estimator rapidly jumps to a more suitable estimate. The far most common event detection approach is to compare the properties of the data stream on a long term time window with a more short term time window \cite{gama2013}. In such window based approaches, each sample in the window is given an equal weight, but intuitively it is more reasonable to give more weight to the most recent data which is done by the SLWE. In this paper we therefore suggest to build the event detection procedure by comparing the estimate by the stationary estimator with an SLWE estimator. Through lightweight and subtle hypothesis testing mechanisms, we decide, in each iteration, if the stationary estimate should jump to new value (event detected) or not. Quite surprisingly, we have found only one other paper in the literature using the advantages of the SLWE for event detection, namely the paper by Ross et al. (2012) \cite{ross2012}. Compared to \cite{ross2012}, our suggested approach is simpler and better founded theoretically. We present the estimation procedure for the binomial and multinomial distributions, but can be applied to other distributions as well.
The article is organized as follows.
In Section \ref{sec:RelatedWork} we review related work. In Section \ref{sec:weak} we present the SLWE estimator for a stream of Bernoulli variables and in Section \ref{sec:shiftenv} we present the details of our approach. In Section \ref{sec:multinom} we extend the scheme to the multinomial case. Finally, in Section \ref{sec:exp} we perform thorough evaluation of the algorithms and draw some conclusions in Section \ref{sec:closrem}.

\section{Related Work and State-of-the-Art}
\label{sec:RelatedWork}
In this Section we review related work.
First, in Section \ref{sec:estimation} we will review legacy scheme for estimation under non-stationary environment.
Then, in Section \ref{sec:parameters} we will review the different approaches for controlling the parameters of estimators operating in non-stationary environments.

\subsection{Estimation in Non-Stationary Environments}
\label{sec:estimation}

%Re-written Anis-4
%Traditionally available methods that cope with non-stationary
%distributions resort to the so-called \emph{sliding window}
%approach, which is a limited-time variant of the well-known MLE
%scheme. The latter model is useful for discounting stale data in
%data stream observations. Data samples arrive continuously and only
%the most recent observations are used to compute the current
%estimates. Any data occurring outside the current window is
%forgotten and replaced by the new data. The problem with using
%sliding windows is the following: If the time window is too small
%the corresponding estimates tend to be poor.  As opposed to this, if
%time window is too large, the estimates prior to the change of the
%parameter have too much influence on the new estimates. Moreover,
%the observations during the entire window width must be maintained
%and updated during the process of estimation\footnote{Enhancements to this philosophy will be discussed later in this section.}.

Probably, the most classical and utilized method for dealing with non-stationary estimation problems is the
\emph{sliding window} approach which can be seen as a short memory version of the MLE.
According to the sliding window approach, online the last samples that fit in the window are used to compute the estimates.
Nevertheless, the \emph{sliding window} method suffers from a tuning problem. In fact, if the size of the window is chosen too large, then the quality of the estimates will be deteriorated by stale data values, while choosing a too small window size would rather lead to poor estimates with low confidence.

%Re-written Anis-5
%Apart from the sliding window approach, many other methods have been
%proposed, which deal with the problem of detecting change points
%during estimation. In general, there are two major competitive
%sequential change-point detection algorithms: Page's cumulative sum
%(CUSUM) \cite{Bass93} detection procedure, and the
%Shiryaev-Roberts-Pollak detection procedure. In \cite{Shir78},
%Shiryayev used a Bayesian approach to detect changes in the
%parameter's distribution, where the change points were assumed to
%obey a geometric distribution. CUMSUM is motivated by a maximum
%likelihood ratio test for the hypothesis that a change occurred.
%Both approaches utilize the log-likelihood ratio for the hypotheses
%that the change occurred at the point, and that there is no change.
%Inherent limitations of CUMSUM and the Shiryaev-Roberts-Pollak
%approaches for on-line implementation are the demanding
%computational and memory requirements. In contrast to the CUMSUM and
%the Shiryaev$-$Roberts$-$Pollak approaches, our SLWE avoids the
%intensive computations of ratios.
%Re-written Anis-5
A myriad of works have been proposed to address detecting change points.
Those methods fall under two main families: Page's cumulative sum
(CUSUM) \cite{Bass93} detection procedure, and the
Shiryaev-Roberts-Pollak detection procedure. In \cite{Shir78},  Shiryayev resorted to a Bayesian formulation  in which the change point is assumed to have a geometric
prior distribution. CUSUM uses the idea of maximum
likelihood ratio test hypothesis  to discern change points.
However, a downside of these two approaches is their computational complexity which renders the SLWE as well as the estimator in this paper lightweight alternatives.
%parameter's distribution, where the change points were assumed to
%obey a geometric distribution. CUMSUM is motivated by a maximum
%likelihood ratio test for the hypothesis that a change occurred.

%Re-written Anis-6
%In earlier works \cite{Koychev00b,Koychev06,Koychev00a}, Koychev
%\emph{et al}. introduced the concept of Gradual Forgetting (GF). The
%GF process relies on assigning weights that decrease over time to
%the observations. In this sense, the GF approach assigns most weight
%to the more recent observations, and a lower weight to the
%more-distant observations. Hence, the influence of old observations
%(on the running estimates) decreases with time. It was shown in
%\cite{Koychev00a} that the GF can be an enhancement to the sliding
%window paradigm. In this sense, observations within each sliding
%window are weighted using a GF function.

%Re-written Anis-6
When it comes to extensions of the sliding window, Koychev \emph{et al}. proposed a new paradigm called Gradual Forgetting (GF) \cite{Koychev00b,Koychev06,Koychev00a}.
According to the principles of GF, observations in the same window are treated unequally when computing the estimates based on weight assignment.
Recent observations receive more weights than distant ones. Different forgetting functions were proposed
ranging from linear \cite{koychev2000} to exponential \cite{klinkenberg2004}.
%By using weights combined with instance selection, referred to as instance weighting, one can take advantage of the ability of support vector machines to process weighted instances \cite{klinkenberg2004}.

%Re-written Anis-7
%Using a completely different tool-set, Oommen and Rueda \cite{OommenRueda06} presented a strategy
%by which the parameters of a binomial/multinomial distribution can
%be estimated when the underlying distribution is non-stationary. The
%method has been referred to as the Stochastic Learning Weak
%Estimator (SLWE), and is based on the principles of
%\emph{continuous} stochastic Learning Automata (LA).

%Re-written Anis-7
In \cite{OommenRueda06}, Oommen and Rueda presented the SLWE to estimate the underlying parameters of  time varying binomial/multinomial distribution.
The SLWE  originally stems from the theory of variable structure Learning Automata \cite{narendra2012learning}, and more particularly, its reward-inaction flavor.
The most appealing properties of the SLWE which makes it the state-of-the-art is its multiplicative form of updates.
Two different counter-parts of SLWE \cite{OommenRueda06} for discretized spaces was recently proposed in \cite{yazidi2016stochastic} and \cite{YazidiO16}. In a similar manner to the SLWE, the latter solution also suffers from the problem of tuning the resolution parameter.

%However, our SDWE is more general than the latter solution \cite{OleOmmen10} since it can handle multinomial estimates without requiring the use of a hierarchy of LA. Despite that the idea of hierarchy can also be applied in the case of SDWE so as to generalize the binomial case to the multinomial case, we provide a different way to handle multinomial estimates by devising a novel elegant RW carefully designed for the latter kind of estimates.
%
%The  \cite{OleOmmen10} and to the \emph{ Stochastic Searching on the Line } \cite{Oommen97}. However, our paradigm is more general than the approaches reported in \cite{OleOmmen10,Oommen97} since it can handle mutinomial estimates.

%By virtue of this discretization, the convergence rate has been shown to increase.  Moreover, our scheme possesses a lower complexity compared to the SLWE measured in the term of number of updates per time step to the running estimates vector.

%%%%New References

\subsection{Estimation using Adjustable parameters}
\label{sec:parameters}

%Re-written Anis-7
%As opposed to the above-mentioned papers, in the last decades, a wide range of techniques for estimation in dynamically changing environments have appeared. We provide here a brief overview of representative \emph{on-line} approaches particularly relevant to the family of techniques pioneered in the present paper. For a more detailed and comprehensive treatment of these, we refer the reader to recent surveys \cite{gama2013,kulkarni2014}, which include approaches based on \emph{adaptive windowing},  \emph{aging factors}, \emph{instance selection} and \emph{instance weighting}. Additionally, there is a distinction between \emph{passive} approaches, which intrinsically adapt to changes, and \emph{active} approaches that actively search for changes.

In this Section, we survey some of the most pertinent techniques for estimation in dynamic environments that are orthogonal to the SLWE.
For a thorough survey we refer the reader to the surveys \cite{gama2013,kulkarni2014} which provide a comprehensive taxonomy of estimation methods in non-stationary environments, namely, \emph{adaptive windowing},  \emph{aging factors}, \emph{instance selection} and \emph{instance weighting}.
%Furthermore, techniques are classified into passive approaches
%Additionally, there is a distinction between \emph{passive} approaches, which intrinsically adapt to changes, and \emph{active} approaches that actively search for changes.

Gama  \emph{et al}.  \cite{gama2013} presents a clear distinction between memory management and forgetting mechanisms.
Adaptive windowing \cite{widmer1996} works with the premise of growing the size the sliding window indefinitely until a change is detected via a change detection technique. In this situation, the size of the window is reset whenever a changed is detected.

Another interesting family of approaches assume that the true value of the parameter being estimated is revealed after some delay, which
enables quantifying the error of the estimator. In such settings, some research \cite{tsymbal2008} have used ensemble methods where the output of different estimators is combined using weighted majority voting. The weights of each estimator is adjusted based on its error. In this sense, estimation methods that produce high error see their weight decrease.

%Most of the above approaches are passive in the sense that they gradually adjust to changes, without explicitly pinpointing them. In contrast, another class of approaches actively searches for changes using change detectors. One solution is to maintain an ensemble of estimators, combined using voting, such as the dynamic integration of classifiers found in \cite{tsymbal2008}. The impact of each estimator is based on assessing them on recent data, prioritizing the one that produces the least error, where the change of priority indicates an underlying change in the distribution of data points.
In the same perspective,  the estimated error can be used for re-initializing the estimation as performed in \cite{ross2012}.
In all brevity, changes are detected based on comparing sections of data, using statistical analysis to detect distributional changes, i.e., abrupt or gradual changes in the mean of the data points when compared with a baseline mean with a random noise component. One option is also to keep a reference window and compare recent windows with the reference window to detect changes \cite{dries2009}. This can, for example, be done based on comparing the probability distributions of the reference window and the recent window using Kullback-Leibler divergence \cite{dasu2006,sebastiao2007}.

 %One can also monitor estimation error using various mechanisms, or the raw data itself. Changes in estimation error signals a change in the distribution of data points, triggering re-estimation \cite{ross2012}. I

\section{Stochastic Learning Weak Estimator}
\label{sec:weak}

Let $X_1, X_2, X_3, \ldots$ represent a stream of independent and identically distributed Bernoulli stochastic variables with parameter $p$. That is
\begin{align}
  \label{eq:1}
  \begin{split}
  P(X_n = 0) &= 1-p\\
  P(X_n = 1) &= p
  \end{split}
\end{align}
for $n=1,2,3,\ldots$.

We now want to estimate the parameter $p$ from the stream of Bernoulli variables. Using the weak estimator, the estimate of $p$ is updated by the following recursion
\begin{align}
  \label{eq:2}
  \begin{split}
  \hat{p}_1 &= X_1\\
  \hat{p}_n &= \lambda_n \hat{p}_{n-1} \hspace{20mm}\text{ if } X_n = 0 \\
  \hat{p}_n &= 1 - \lambda_n (1 - \hat{p}_{n-1}) \,\,\,\text{ if } X_n = 1
  \end{split}
\end{align}
where $\hat{p}_n$ represents the estimate of $p$ after the arrival of $X_n$ and $\lambda_n,\,\, n = 1,2,\ldots$ are constants between zero and one. The intuition is that if $X_n = 0$ we should reduce our current estimate of $p$ (the probability of one) which is achieved by multiplying the current estimate of $p$ by $\lambda_n$. On the other hand, if $X_n = 1$ we should reduce the estimate of $1-p$ (the probability of zero) which gives
\begin{align*}
  1 - \hat{p}_n &= \lambda_n (1 - \hat{p}_{n-1}) \\
  \hat{p}_n &= 1 - \lambda_n (1 - \hat{p}_{n-1}) \\
\end{align*}
which is equal to the last equation in \eqref{eq:2}.
%In practice the values of the $\lambda$'s are chosen relatively close to one resulting in small changes in the estimate of $p$.

The recursions in \eqref{eq:2} can be written as follows
\begin{align*}
  \hat{p}_n = X_n(1 - \lambda_n (1 - \hat{p}_{n-1})) + (1-X_n) \lambda_n \hat{p}_{n-1}, \,\, n=1,2,\ldots
\end{align*}
with $\lambda_1 = 0$. Using straight forward calculations this simplifies to
\begin{align}
  \label{eq:3}
  \hat{p}_n = \lambda_n \hat{p}_{n-1} + (1 - \lambda_n)X_n
\end{align}
which can be recognized as the exponentially weighted moving average.

We can prove by induction that $\hat{p}_n$ is an unbiased estimator for $p$ for every $n$ as follows
\begin{align*}
  E(\hat{p}_1) &= E(X_1) = p\,\,\,\text{ (recall } \lambda_1 \text{ is set to } 0 \text{)}\\
  E(\hat{p}_n) &= E(\lambda_n \hat{p}_{n-1} + (1 - \lambda_n)X_n) \\
                      &= \lambda_n p + (1-\lambda_n) p\\
                       &= p
\end{align*}
The variance depends on the choice of the  $\lambda$'s. We look at two special cases.\\
\textit{$\lambda$ constant}: It can be proved that if we set all $\lambda_n = \lambda$, the limiting variance is given by \cite{ZhangOommen11}
\begin{align*}
  \lim_{n \rightarrow \infty} \text{Var}\left(\hat{p}_n\right) = \frac{1-\lambda}{1+\lambda}p(1-p)
\end{align*}
An advantage of the constant $\lambda$ approach is that if the value of $p$ is changing with time in the underlying Bernoulli data stream, the estimator will rapidly adjust to these changes \cite{ZhangOommen11}. A disadvantage is that if $p$ is not changing, the variance of the estimator will have a lower limit and never reaches zero.

\textit{Sample mean}: The sample mean is the maximum likelihood estimator of $p$ and is the natural estimator to use if $p$ is not changing with time. Let $\bar{p}_{n-1}$ denote the sample mean of the first $n-1$ Bernoulli variables from the stream
\begin{align*}
  \bar{p}_{n-1} = \frac{1}{n-1} \sum_{i=1}^{n-1} X_i
\end{align*}
When $X_n$ arrives, the sample mean can be updated as follows
\begin{align}
  \label{eq:4}
  \bar{p}_{n} = \frac{n-1}{n}\bar{p}_{n-1} +  \frac{1}{n} X_n
\end{align}
which is equivalent to \eqref{eq:3} with $\lambda_n = (n-1)/n$. This means that the sample mean is a special case of the general recursion in \eqref{eq:3}. It is well known that $\lim_{n \rightarrow \infty} \text{Var}\left(\bar{p}_n \right) = 0$. A disadvantage of the sample mean is that if $p$ is changing with time, the sample mean will become very slow at adjusting to these changes. On the other hand if $p$ is not changing, the sample mean is the optimal estimator in the sense that no other unbiased estimators can achieve less variance.

\section{Estimation in a shifting environment}
\label{sec:shiftenv}

Suppose a situation where $p$ is switching between different values with time. An example could be a news stream where the topic of the news stream suddenly changes due to different real life events. Another example could be a machine operated by different employees at different time periods each with its own error rate characterized by $p$. We assume that the instants in which $p$ switches value are unknown.

For such systems a natural strategy would be to use the sample mean whenever the $p$ is not changing, and a mechanism to ``jump'' fast towards a new estimate if the value of $p$ has changed. In this paper we suggest a method that combines the sample mean and a weak estimator with constant $\lambda$. Ross et al. (2012) \cite{ross2012} is the only paper we have found in the literature that uses the same idea. Let $\hat{p}^{\lambda}_n$ and $\bar{p}_n$ denote the weak estimator with constant $\lambda$ and the sample mean, respectively, after the arrival of $X_n$. If $p$ switches value, $\hat{p}^{\lambda}_n$ will rapidly adjust to the new value of $p$, while minor changes will appear to $\bar{p}_n$. This can be used to build an efficient method to detect changes in $p$ and ``jump'' to the new value of $p$. The key ingredient will be the distribution of the difference between the estimators $\hat{p}^{\lambda}_n - \bar{p}_n$. If $p$ switches value we expect that $\hat{p}^{\lambda}_n - \bar{p}_n$ will be large in absolute value and larger then what we would expect if $p$ remains constant. This can be used to build a statistical test if $p$ has changed value or not. We start by presenting the expectation and variance of this distribution.
\begin{theorem}
\label{the:1}
  Let $X_1, X_2, X_3, \ldots$ represent a stream of independent and identically distributed Bernoulli stochastic variables with parameter $p$. Further let $\hat{p}^{\lambda}_n$ and $\bar{p}_n$ denote the weak estimator with constant $\lambda$ and the sample mean, respectively, after the arrival of $X_n$. Then
  \begin{align}
    &\text{E}\left(\hat{p}^{\lambda}_n - \bar{p}_n \right) = 0 \\
    \begin{split}
      \label{eq:5}
    &\text{Var}\left(\hat{p}^{\lambda}_n - \bar{p}_n \right) = p(1-p) \left( \left( \frac{1}{n} - \lambda^{n-1} \right)^2 + \sum_{i=2}^n \left( \frac{1}{n} - (1-\lambda) \lambda^{n-i} \right)^2\right)
    \end{split}
  \end{align}
\end{theorem}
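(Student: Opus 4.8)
The plan is to exploit that both $\hat{p}^{\lambda}_n$ and $\bar{p}_n$ are \emph{linear} functionals of the independent inputs $X_1,\dots,X_n$, so their difference is again a linear combination of independent random variables and its mean and variance can be read straight off the coefficients.

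First I would unroll the recursion \eqref{eq:3} with $\lambda_1=0$ and $\lambda_n=\lambda$ for $n\ge 2$; a one-line induction gives the closed form
\begin{align*}
  \hat{p}^{\lambda}_n \;=\; \lambda^{n-1} X_1 \;+\; \sum_{i=2}^{n} (1-\lambda)\lambda^{n-i} X_i .
\end{align*}
Since $\bar{p}_n=\frac{1}{n}\sum_{i=1}^n X_i$, subtraction gives
\begin{align*}
  \hat{p}^{\lambda}_n - \bar{p}_n \;=\; \Bigl(\lambda^{n-1}-\tfrac{1}{n}\Bigr) X_1 \;+\; \sum_{i=2}^{n} \Bigl((1-\lambda)\lambda^{n-i}-\tfrac{1}{n}\Bigr) X_i .
\end{align*}

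For the mean, linearity of expectation reduces the claim to showing that the coefficients above sum to zero; the coefficients of $\hat{p}^{\lambda}_n$ sum to $\lambda^{n-1}+(1-\lambda)\sum_{j=0}^{n-2}\lambda^{j}=1$ by the finite geometric series, and those of $\bar{p}_n$ sum to $1$ trivially, so the difference has coefficient sum $0$ and hence mean $0$ (equivalently, one may simply invoke the unbiasedness of each estimator established in Section~\ref{sec:weak}). For the variance, $X_1,\dots,X_n$ are independent with $\text{Var}(X_i)=p(1-p)$, so the variance of the above linear combination is $p(1-p)$ times the sum of the squared coefficients, which is exactly the bracketed expression in \eqref{eq:5} after writing $(\lambda^{n-1}-1/n)^2=(1/n-\lambda^{n-1})^2$ and likewise for the remaining terms.

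There is no genuinely hard step here; the only thing that requires care is propagating the initialization $\lambda_1=0$ correctly, which is why the $X_1$ term carries coefficient $\lambda^{n-1}$ rather than $(1-\lambda)\lambda^{n-1}$ — mishandling this would spoil both the geometric-series cancellation in the mean and the leading squared term in the variance. Beyond that, everything is linearity of expectation and additivity of variance for independent summands.
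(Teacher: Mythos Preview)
Your proposal is correct and follows essentially the same approach as the paper: unroll the recursion to express $\hat{p}^{\lambda}_n$ as a weighted sum of the $X_i$, subtract the sample mean, and then read off the mean and variance of the resulting linear combination of independent Bernoulli variables using the geometric series and additivity of variance. Your explicit emphasis on the role of the initialization $\lambda_1=0$ is a nice touch that the paper leaves implicit.
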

\begin{proof}
We start by computing through the recursions in \eqref{eq:3}
\begin{align*}
  \hat{p}_1 &= X_1\\
  \hat{p}_2 &= \lambda_2 \hat{p_1} + (1-\lambda_2)X_2\\
            &= \lambda_2 X_1 + (1-\lambda_2)X_2\\
  \hat{p}_3 &= \lambda_3\hat{p_2} + (1-\lambda_3)X_3\\
            &= \lambda_3[\lambda_2 X_1 + (1-\lambda_2)X_2] + (1-\lambda_3)X_3\\
            &= \lambda_3 \lambda_2 X_1 + \lambda_3 (1-\lambda_2)X_2 + (1-\lambda_3)X_3\\
            &\vdots \hspace{2cm} \vdots\\
  \hat{p}_n &= \sum_{i=1}^n X_i(1-\lambda_i)\prod_{j=i+1}^n \lambda_j, \text{ with } \lambda_1 = 0
\end{align*}
Setting $\lambda_n = \lambda$ (and still $\lambda_1 = 0$) we get
\begin{align*}
  \hat{p}_n = X_1 \lambda^{n-1} + \sum_{i=2}^n X_i (1-\lambda) \lambda^{n-i}
\end{align*}
and setting $\lambda_n = (n-1)/n$ we get the sample mean $\hat{p}_n = \frac{1}{n} \sum_{i=1}^n X_i$.

Now we are ready to compute the expectation and variance
\begin{align*}
  E(\hat{p}^{\lambda}_n - \bar{p}_n)& = E\left( \frac{1}{n} \sum_{i=1}^n X_i - X_1 \lambda^{n-1} - \sum_{i=2}^n X_i (1-\lambda) \lambda^{n-i} \right) \\
                            &= \frac{1}{n} \sum_{i=1}^n p - p \lambda^{n-1} - p\sum_{i=2}^n (1-\lambda) \lambda^{n-i} = \\
                            &= p - p \lambda^{n-1} - p(1-\lambda) \sum_{i=0}^{n-2} \lambda^i\\
                            &= p - p \lambda^{n-1} - p(1-\lambda) \frac{1 - \lambda^{n-1}}{1-\lambda} \\
                            &= 0
\end{align*}
\begin{align*}
  \text{Var}(\hat{p}^{\lambda}_n - \bar{p}_n) &= \text{Var} \left( \frac{1}{n} \sum_{i=1}^n X_i - X_1 \lambda^{n-1} - \sum_{i=2}^n X_i (1-\lambda) \lambda^{n-i} \right) = \\
  & = \text{Var}\left( \left( \frac{1}{n} - \lambda^{n-1} \right)X_1 + \sum_{i=2}^n \left( \frac{1}{n} - (1-\lambda) \lambda^{n-i} \right)X_i \right)\\
  &= p(1-p) \left( \left( \frac{1}{n} - \lambda^{n-1} \right)^2 + \sum_{i=2}^n \left( \frac{1}{n} - (1-\lambda) \lambda^{n-i} \right)^2\right)
\end{align*}
\end{proof}

%%%STOPPED HERE ANIS
Please note that $\text{Var}(\hat{p}^{\lambda}_n - \bar{p}_n)$ can be computed recursively such that all variances up to $n$ can be computed in $O(n)$ time. The actual recursions are not shown, but are straightforward to compute from \eqref{eq:5}. Another appealing property is that the variance does not depend on the stream of observations and can be computed before the data streaming starts. This lays the foundations for building very efficient algorithms.

Theorem \ref{the:1} stated the expectation and variance of the distribution of $\hat{p}^{\lambda}_n - \bar{p}_n$. Next we investigate other properties the distribution. From the proof we saw that $\hat{p}^{\lambda}_n - \bar{p}_n$ could be written as follows
\begin{align*}
  \hat{p}^{\lambda}_n - \bar{p}_n = \left( \frac{1}{n} - \lambda^{n-1} \right)X_1 + \sum_{i=2}^n \left( \frac{1}{n} - (1-\lambda) \lambda^{n-i} \right)X_i
\end{align*}
which is a weighted sum of the independent Bernoulli variables. If the sum satisfies the Lindeberg criterion (and thus the Lyapunov criterion), the sum will, according to the central limit Theorem, converge to a normal distribution \cite{karr12}. Unfortunately the sum does not satisfy this criterion (proofs omitted). A second option is to study the distribution of $\hat{p}^{\lambda}_n - \bar{p}_n$ by stochastic simulation. We perform the following experiment. We generated $n=50$ independent outcomes from the Bernoulli distribution and computed $\hat{p}^{\lambda}_{50} - \bar{p}_{50}$ using $\lambda = 0.95$. Further we repeated this procedure $N=10\,000$ times. The upper left panel in Figure \ref{fig:1} shows the histogram of these values when $p = 0.1$.
\begin{figure}
  \centering
  \includegraphics[width = 11cm]{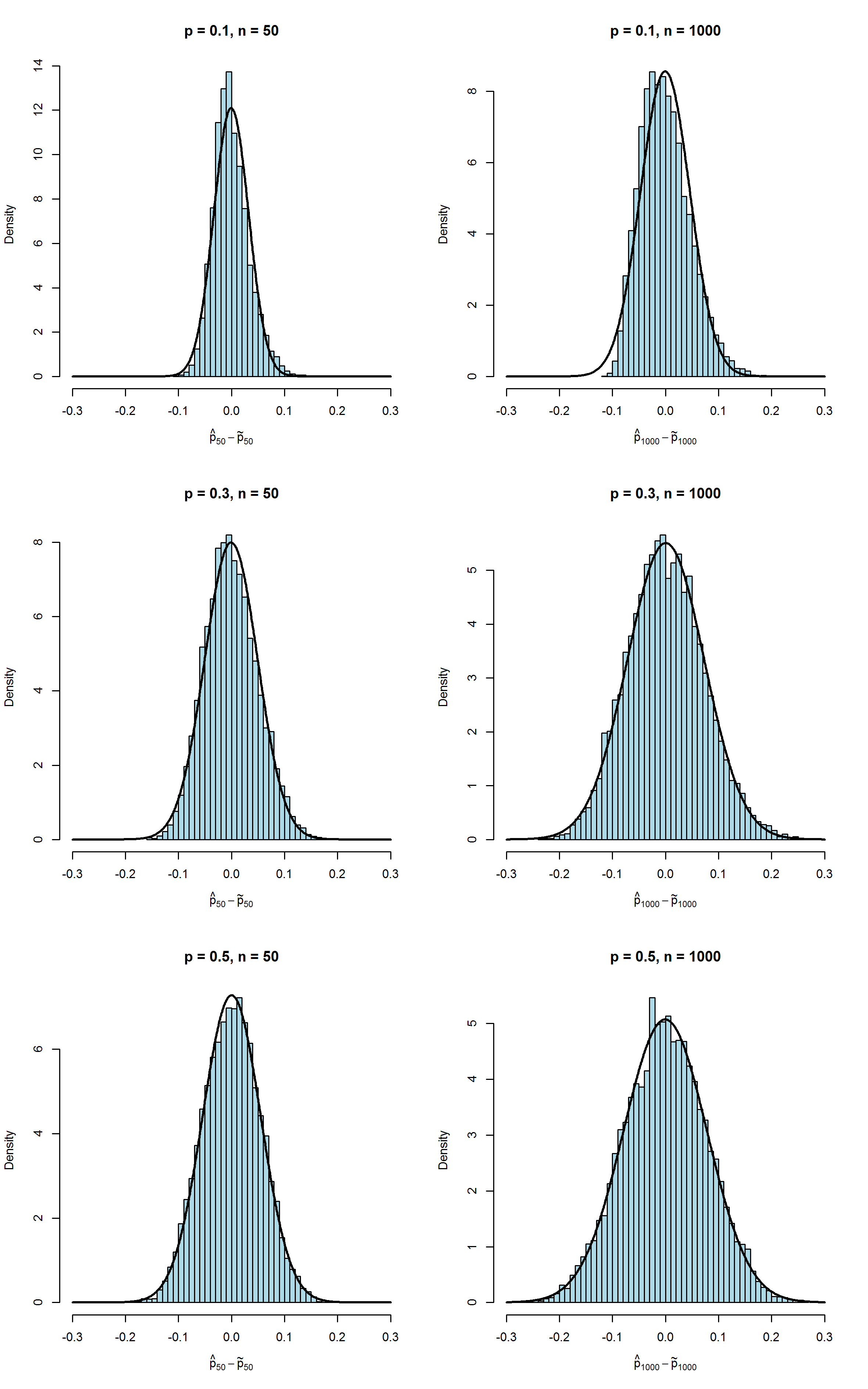}
  \caption{The distribution of $\hat{p}^{\lambda}_{n} - \bar{p}_{n}$ and the normal distribution with the same expectation and variance (black curve).}
  \label{fig:1}
\end{figure}
The black curve is the normal distribution with expectation and variance as given by Theorem \ref{the:1}. The upper right panel shows the same, but with $n = 1000$. The second and the third row shows the same but with $p = 0.3$ and $p = 0.5$. Overall we see that the distribution of $\hat{p}^{\lambda}_{n} - \bar{p}_{n}$ is almost identical to a normal distribution. We only observe that when $p$ is small (or high) the distribution is a little asymmetric compared to a normal distribution. Based on these observations it is a reliable choice to build a test assuming that $\hat{p}^{\lambda}_{n} - \bar{p}_{n}$ is normally distributed. We then get the following test.
\begin{theorem}
\label{the:2}
 Let $X_1, X_2, X_3, \ldots$ represent a stream of independent and identically distributed Bernoulli stochastic variables with parameter $p$. Further let $z_{\alpha}$ denote the $\alpha$ quantile of the standard normal distribution. Define the hypotheses
 \begin{itemize}
 \item[$H_0$:] The underlying $p$ has not changed value
 \item[$H_1$:] The underlying $p$ has changed value
 \end{itemize}
Suppose that we decide to reject $H_0$ if
\begin{align}
  \label{eq:6}
  \frac{|\hat{p}^{\lambda}_{n} - \bar{p}_{n}|}{\sqrt{\text{Var}(\hat{p}^{\lambda}_{n} - \bar{p}_{n})}} > z_{\alpha/2}
\end{align}
Then the probability of rejecting $H_0$ if $H_0$ is true is approximately $\alpha$ and the rejection rule \eqref{eq:6} controls the type I error.
\end{theorem}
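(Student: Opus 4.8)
The plan is to obtain the statement directly from Theorem \ref{the:1} together with the empirical normal approximation discussed in the paragraph preceding the theorem; no new tools are needed. First I would observe that under $H_0$ the parameter $p$ is constant over the observations $X_1,X_2,\ldots,X_n$ currently in use, so the hypotheses of Theorem \ref{the:1} hold verbatim and may be invoked: $\text{E}(\hat{p}^{\lambda}_n-\bar{p}_n)=0$ and $\text{Var}(\hat{p}^{\lambda}_n-\bar{p}_n)$ is the (strictly positive, for $0<p<1$ and $\lambda<1$) quantity given in \eqref{eq:5}.

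Second, I would invoke the distributional fact established by the simulation study summarized in Figure \ref{fig:1}: the centered weighted sum of independent Bernoulli variables $\hat{p}^{\lambda}_n-\bar{p}_n$ is, to a very good approximation, normally distributed. Combining this with the first step, the standardized statistic
\[
Z_n \;:=\; \frac{\hat{p}^{\lambda}_n-\bar{p}_n}{\sqrt{\text{Var}(\hat{p}^{\lambda}_n-\bar{p}_n)}}
\]
is, under $H_0$, approximately standard normal, $Z_n \sim N(0,1)$ up to the small discrepancy visible in Figure \ref{fig:1}.

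Third, I would compute the probability of a false rejection. By construction the rule \eqref{eq:6} rejects $H_0$ precisely when $|Z_n|>z_{\alpha/2}$, so the type I error equals $P(|Z_n|>z_{\alpha/2})$ under $H_0$. Using the approximation $Z_n\sim N(0,1)$, the symmetry of the standard normal density about the origin, and the defining property of the critical value $z_{\alpha/2}$ (upper-tail probability $\alpha/2$), one gets $P(|Z_n|>z_{\alpha/2}) = 2\,P(Z_n>z_{\alpha/2}) \approx \alpha$. Hence the false-rejection probability is approximately $\alpha$, which is exactly the claim that \eqref{eq:6} controls the type I error.

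The only non-elementary ingredient, and therefore the main obstacle, is the normal approximation for $Z_n$. As already noted just before the theorem, this cannot be derived from the Lindeberg (or Lyapunov) central limit theorem, since the weights $\tfrac1n-\lambda^{n-1}$ and $\tfrac1n-(1-\lambda)\lambda^{n-i}$ do not satisfy the required asymptotic negligibility condition; the approximation is consequently justified numerically rather than analytically, and the conclusion is necessarily stated as ``approximately $\alpha$''. A secondary, minor caveat is that \eqref{eq:5} depends on the unknown $p$ only through the factor $p(1-p)$, so a practical implementation must substitute an estimate of $p(1-p)$ into the denominator of \eqref{eq:6}, introducing a second small source of error; one could in principle quantify both errors --- for instance a Berry--Esseen-type bound for the first --- but this is not required for the approximate statement of the theorem.
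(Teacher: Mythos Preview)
Your proposal is correct and follows essentially the same approach as the paper: invoke Theorem~\ref{the:1} together with the empirical normal approximation from Figure~\ref{fig:1} to conclude that the standardized statistic is approximately $N(0,1)$ under $H_0$, and then read off that the two-sided rejection probability is approximately $\alpha$. Your write-up is in fact more explicit than the paper's (spelling out the symmetry argument and flagging the two approximation caveats), but the underlying argument is identical.
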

\begin{proof}
Let $N(\mu, \sigma)$ denote a normal distribution with expectation $\mu$ and standard deviation $\sigma$. From the discussion above and Figure \ref{fig:1} we know that
\begin{align*}
  \frac{\hat{p}^{\lambda}_{n} - \bar{p}_{n}}{\sqrt{\text{Var}(\hat{p}^{\lambda}_{n} - \bar{p}_{n})}} \approx N(0,1)
\end{align*}
which means that
\begin{align*}
  P(\text{Type I  error}) &= P(\text{reject }H_0 \, | \, H_0\text{ true}) =\\ & =P\left(  \dfrac{|\hat{p}^{\lambda}_{n} - \bar{p}_{n}|}{\sqrt{\text{Var}(\hat{p}^{\lambda}_{n} - \bar{p}_{n})}} > z_{\alpha/2} \right) \approx \alpha
\end{align*}
\end{proof}
%The procedure is effective since the weak estimator with constant $\lambda$ adjusts to a new value of $p$ very rapidly.
From Theorem \ref{the:1} we see that $\text{Var}(\hat{p}^{\lambda}_{n} - \bar{p}_{n})$ depends on $p$ which of course is unknown. To perform the test above, a natural choice is to substitute $p$ with the sample mean $\bar{p}_{n}$ since this is our best estimate of $p$ under the hypothesis that $p$ is constant.

The basic idea of our method is to estimate $p$ using the sample mean, but perform occasional jumps if the test in Theorem \ref{the:2} brings evidence that $p$ has switched value. In the Section we discuss different alternatives to perform the jumps.

\subsection{Performing a jump}
\label{sec:jump}

Let $\tilde{p}_n$ denote the estimate using the sample mean with jumps method after the arrival of $X_n$. Further let $\tilde{\lambda}_n$ denote the value used for $\lambda_n$ in the recursions in \eqref{eq:3} to compute $\tilde{p}_n$. Assume now that the test in Theorem \ref{the:2} brings evidence that $p$ has switched value which means that the current estimate $\tilde{p}_n$ is not reliable (since it is based on the sample mean). Therefore we need to adjust the estimate $\tilde{p}_n$ (jump). Two options seem natural to perform the jump.
\begin{itemize}
\item Forget the whole estimation history and set $\tilde{p}_{n} = X_n$
\item Assume that the current estimate based on the weak estimator with constant $\lambda$, $\hat{p}_n^{\lambda}$, is reliable since it adjusts fast and set $\tilde{p}_{n} = \hat{p}_n^{\lambda}$.
\end{itemize}
A third option could be to set $\tilde{p}_{n}$ equal to some weighting between these two alternatives.

To continue the update of the estimator $\tilde{p}_n$ after the jump, we also need to decide a new value for $\tilde{\lambda}_n$. There are at least two natural alternatives
\begin{itemize}
\item Recall that by setting $\lambda_n = (n-1)/n$ in \eqref{eq:3}, we get the sample mean. If we decide to follow the first option above and set $\tilde{p}_{n} = X_n$, $\tilde{p}_{n}$ is just the sample mean of one observation which means that it is natural to set  $\tilde{\lambda}_n = (1-1)/1 = 0$.
%Further it is natural to set $\text{Var}(\tilde{p}_n) = \text{Var}(X_n) = p(1-p)$. When the next sample, $X_{n+1}$, arrives, we set $\tilde{\lambda}_{n+1} = (2-1)/1 = 1/2$ since we then compute the sample mean of the two observations after the jump, $X_n$ and $X_{n+1}$. The variance is updated recursively.
\item If we decide to follow the second option above and set $\tilde{p}_{n} = \hat{p}_n^{\lambda}$, it seems natural to do the next update of $\tilde{p}_n$ similar to the update $\hat{p}_n^{\lambda}$, which means to relate $\tilde{\lambda}_n$ to $\lambda$. Since we will continue to update $\tilde{p}_n$ according to the sample mean, we must relate such a choice to the number of terms in a sample mean. We do this as follows. Define $\tilde{n}$ as the solution of the equation
  \begin{align*}
    \frac{\tilde{n}-1}{\tilde{n}} = \lambda
  \end{align*}
Solving with respect to $\tilde{n}$ and rounding off to the nearest integer we get
\begin{align*}
  \tilde{n} = \left[ \frac{1}{1-\lambda} \right]
\end{align*}
where $[a]$ denotes the value of a $a$ rounded of to the nearest integer. The interpretation of $\tilde{n}$ is the number of terms in a sample mean in which an update of the estimate will be similar to the weak estimator $\hat{p}_n^{\lambda}$.
%Given $\tilde{n}$, we set the variance equal to a sample mean with $\tilde{n}$ terms. Since we do not know $p$, we instead use $\tilde{p}_n$, i.e. $\text{Var}(\tilde{p}_n) = \tilde{p}_n(1 - \tilde{p}_n)/\tilde{n}$. When the next sample, $X_{n+1}$, arrives, we update according to the sample mean weak estimator $\tilde{\lambda}_{n+1} = (\tilde{n}+1-1)/(\tilde{n}+1)$.
\end{itemize}
Note that the choice of $\tilde{\lambda}_n$ in the first alternative above is equivalent to setting $\tilde{n} = 1$. It may be that when the test in Theorem \ref{the:2} detects a change in $p$, the value of $\hat{p}_n^{\lambda}$ has not converged completely around the new value of $p$. Therefore a value of $\tilde{n}$ somewhere between 1 and $[1/(1-\lambda)]$ may be an even better alternative. By relating the variance of $\tilde{p}_n$ to a sample mean with $\tilde{n}$ terms, the variance $\text{Var}(\hat{p}^{\lambda}_{n} - \tilde{p}_{n})$, which we need in the test in Theorem \ref{the:2}, can be computed recursively. In addition, all the variances can recursively be computed in advance before the data stream starts.

Before the algorithm can be run, we also need to decide a value for $\alpha$ in the test proposed in Theorem \ref{the:2}. When we run the test, the probability of wrongly detecting a change in $p$, is approximately $\alpha$. In practice we may run the test many times, for example every tenth iteration. If we run the test many times, the chance of wrongly detecting a change in $p$ in some of these tests naturally will be larger then $\alpha$. This refers to the multiple testing problem in the statistical literature, see e.g. \cite{benjamini95}. A simple and much used approach is the Bonferroni correction where a significance level of $\alpha/M$ is used instead of $\alpha$, where $M$ is the number of tests. There are two challenges with applying this approach (and other standard corrections). First, we do not know the number of tests we need to run. Second, the Bonferroni correction assumes that all the tests are independent. In our case this is far from true, since two subsequent tests are based on almost the same data stream (only a few extra observation have been added since the last test) and the outcomes are highly correlated. Using the Bonferroni correction will result in a too low significance level, and the tests may never detect that $p$ has changed. In practice, setting $\alpha$ to about $10^{-3}$ overall performs well and is, as expected, somewhere between standard significance levels (0.05) and Bonferroni corrected levels.

The algorithm using the second option above is shown in Algorithm \ref{alg:1}.
\begin{algorithm}
\caption{The sample mean with jumps algorithm.}\label{alg:1}
\INPUT\\
$X_1, X_2, X_3, \ldots$ //Stream of Bernoulli variables\\
$\lambda$\\
$\alpha$\\
$D$ //How often to perform the test in Theorem \ref{the:2}\\
$N$ //Max number of iterations \\
\METHOD
\begin{algorithmic}[1]
\STATE $\tilde{n} \gets 0$
\STATE $\hat{p}^{\lambda}_1 \gets X_1$
\STATE $\tilde{p}_1 \gets X_1$
  \FOR {$n \in 1,2,\ldots,N$}
    \STATE $\hat{p}_n^{\lambda} \gets \lambda \hat{p}^{\lambda}_{n-1} + (1 - \lambda)X_n$
    \STATE $\tilde{n} \gets \tilde{n}+1$
    \STATE $\tilde{p}_n \gets \frac{\tilde{n}-1}{\tilde{n}} \tilde{p}_{n-1} + \frac{1}{\tilde{n}} X_n$
    \IF {$n \bmod D == 0$}
      \IF {$\frac{|\hat{p}^{\lambda}_{n} - \tilde{p}_{n}|}{\sqrt{\text{Var}(\hat{p}^{\lambda}_{n} - \tilde{p}_{n})}} > z_{\alpha/2}$}
        \STATE $\tilde{p}_n \gets \hat{p}_n^{\lambda}$
        \STATE $\tilde{n} \gets [1/(1-\lambda)]$
      \ENDIF
    \ENDIF
  \ENDFOR
\end{algorithmic}
\end{algorithm}

\section{Extension to the multinomial case}
\label{sec:multinom}

We now show how the jump algorithm above can be extended to the multinomial case. As described above, a Bernoulli variable takes the values 0 or 1 with probabilities $1-p$ and $p$, respectively. For the multinomial case this is extended such that $X$ takes one of the values $\{1,2,\ldots,r\}$ with probabilities $\{p_1, p_2, \ldots, p_r\}$, such that $\sum_{i=1}^r p_i = 1$.
For ease of presentation below, define a stochastic vector $Y$ which is a map from $X$ as follows
\begin{align}
  \label{eq:12}
  Y = [\mathbb{I}(X = 1), \mathbb{I}(X = 2), \ldots, \mathbb{I}(X = r)]
\end{align}
where $\mathbb{I}(A)$ denote the indicator function returning one if $A$ is true and zero if $A$ is false. We see that $Y$ is a vector with value one in position $X$ and zero in all the other positions.

Let $Y_1, Y_2, Y_3, \ldots$ denote a stream of independent stochastic variables identical to $Y$. We now want to maintain running estimates of the probabilities $\{p_1, p_2, \ldots, p_r\}$. The SLWE in \eqref{eq:3} can easily be extend to the multinomial case as follows
\begin{align}
  \label{eq:7}
  [\hat{p}_{n,1}, \ldots, \hat{p}_{n,r}] = \lambda_n[\hat{p}_{n-1,1}, \ldots, \hat{p}_{n-1,r}] + (1 - \lambda_n) Y_n
\end{align}
where $\hat{p}_{i,n}$ denote the estimate of $p_i$ after receiving the variable $Y_n$ from the data stream.

Now let $[\hat{p}^{\lambda}_{n,1}, \ldots, \hat{p}^{\lambda}_{n,r}]$ denote estimates based on \eqref{eq:7} using a constant value of $\lambda$ and let $[\overline{p}_{n,1}, \ldots, \overline{p}_{n,r}]$ denote the sample mean, i.e. using $\lambda_n = (n-1)/n$. Following the same argumentation as in Section \ref{sec:shiftenv} we have the following
\begin{align}
  \label{eq:10}
  \frac{\hat{p}^{\lambda}_{n,i} - \bar{p}_{n,i}}{\sqrt{\text{Var}(\hat{p}^{\lambda}_{n,i} - \bar{p}_{n,i})}} \approx N(0,1), \,\,\, i = 1,2,\ldots,r
\end{align}
As an extension to Section \ref{sec:shiftenv}, we now want to construct a statistical test to check wether the unknown probability vector $[p_1, p_2, \ldots, p_r]$ has changed value. A common statistical test on the probability vector of the multinomial distribution is the Pearson's $\chi^2$ test \cite{agresti2011categorical}. Adapting the $\chi^2$ test to the application in this paper, we get the following theorem.
\begin{theorem}
\label{the:3}
 Let $Y_1, Y_2, Y_3, \ldots$ represent a stream of independent and identically distributed multinomial stochastic vectors with probability vector $[p_1, p_2, \ldots, p_r]$. Further let $\chi^2_{n,\alpha}$ denote the $\alpha$ quantile of the $\chi^2$ distribution with $n$ degrees of freedom. Define the hypotheses
 \begin{itemize}
 \item[$H_0$:] The underlying probability vector $[p_1, p_2, \ldots, p_r]$ has not changed value
 \item[$H_1$:] The underlying probability vector $[p_1, p_2, \ldots, p_r]$ has changed value
 \end{itemize}
Suppose that we decide to reject $H_0$ if
\begin{align}
  \label{eq:9}
  \sum_{i=1}^r  \frac{\left(\hat{p}^{\lambda}_{n,i} - \bar{p}_{n,i}\right)^2}{\text{Var}(\hat{p}^{\lambda}_{n,i} - \bar{p}_{n,i})} > \chi^2_{r-1,\alpha}
\end{align}
Then the probability of rejecting $H_0$ if $H_0$ is true is approximately $\alpha$ and the rejection rule \eqref{eq:9} controls the type I error.
\end{theorem}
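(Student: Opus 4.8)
The plan is to reproduce, in this setting, the classical argument behind Pearson's $\chi^2$ goodness-of-fit test, taking equation~\eqref{eq:10} as the input. Write $D_{n,i} = \hat{p}^{\lambda}_{n,i} - \bar{p}_{n,i}$ and $\sigma_{n,i}^2 = \text{Var}(D_{n,i})$, so that the statistic in \eqref{eq:9} is $T_n = \sum_{i=1}^{r} D_{n,i}^2/\sigma_{n,i}^2$. By \eqref{eq:10} each standardized residual $D_{n,i}/\sigma_{n,i}$ is, under $H_0$, approximately $N(0,1)$; if the $r$ residuals were independent, $T_n$ would be approximately $\chi^2_r$. The reduction to $r-1$ degrees of freedom comes from a single exact linear constraint among the $D_{n,i}$, exactly as in the multinomial goodness-of-fit setting.

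First I would record the constraint. Since every $Y_n$ has coordinates summing to one, a one-line induction on the recursion~\eqref{eq:7} gives $\sum_{i=1}^{r}\hat{p}_{n,i}=1$ for all $n$ and for every choice of the $\lambda_n$; in particular $\sum_{i=1}^{r}\hat{p}^{\lambda}_{n,i}=1$ and $\sum_{i=1}^{r}\bar{p}_{n,i}=1$, so $\sum_{i=1}^{r} D_{n,i}=0$ identically. Hence the vector $(D_{n,1},\dots,D_{n,r})$, and therefore the vector of standardized residuals, is supported on an $(r-1)$-dimensional subspace.

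Next I would combine the constraint with the (approximate) joint normality of the standardized residuals, which follows from writing each $D_{n,i}$ as the same weighted sum $\sum_{k=1}^{n} w_k\,\mathbb{I}(X_k=i)$ of independent indicators that appears in the proof of Theorem~\ref{the:1}, now applied coordinatewise. A zero-mean, approximately multivariate normal vector concentrated on an $(r-1)$-dimensional subspace has, when standardized so that its covariance matrix is the orthogonal projection onto that subspace, a sum of squared coordinates that is approximately $\chi^2_{r-1}$; this is the algebraic core of Pearson's theorem. It therefore remains to check that the $1/\sigma_{n,i}^2$ weighting in $T_n$ produces exactly this projection structure. This is the step I expect to be the main obstacle, and it reduces to a short covariance computation: using that the weights $w_k$ above do not depend on the coordinate $i$, one gets $\sigma_{n,i}^2 = p_i(1-p_i)\sum_k w_k^2$ and $\text{Cov}(D_{n,i},D_{n,j}) = -p_i p_j\sum_k w_k^2$ for $i\neq j$, so the correlation matrix of the standardized residuals is precisely the one arising in the classical multinomial problem, and the $\chi^2_{r-1}$ conclusion follows by the same eigenvalue argument (the approximation being most accurate when no single $p_i$ is close to one).

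Putting these together, under $H_0$ the statistic $T_n$ is approximately $\chi^2_{r-1}$, whence $P(\text{reject }H_0\mid H_0\text{ true}) = P(T_n > \chi^2_{r-1,\alpha}) \approx \alpha$, which is the assertion of the theorem. Finally, as in Theorem~\ref{the:2}, each unknown $p_i$ entering $\sigma_{n,i}^2$ is replaced by the sample mean $\bar{p}_{n,i}$, which under $H_0$ does not affect the approximation.
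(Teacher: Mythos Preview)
Your approach parallels the paper's: both argue that each standardized term is approximately $N(0,1)$ and then invoke the linear constraint $\sum_i D_{n,i}=0$ to drop one degree of freedom. The paper's proof stops at that heuristic; you rightly push further and identify the real question --- whether the $1/\sigma_{n,i}^2$ weighting turns the covariance of the standardized residuals into a rank-$(r-1)$ projection --- and flag it as the main obstacle.

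The resolution you give, however, does not go through. It is true that the correlation matrix of $(D_{n,1},\dots,D_{n,r})$ coincides with that of multinomial cell counts $(O_1,\dots,O_r)$; but Pearson's statistic is $\sum_i (O_i-np_i)^2/(np_i)$, not $\sum_i (O_i-np_i)^2/\text{Var}(O_i)=\sum_i (O_i-np_i)^2/(np_i(1-p_i))$. The classical eigenvalue argument relies on the $1/p_i$ weighting, which makes $D^{-1/2}\Sigma D^{-1/2}=I-\sqrt{p}\,\sqrt{p}^{\,T}$ a projection; with the $1/(p_i(1-p_i))$ weighting used in \eqref{eq:9} the resulting matrix is \emph{not} idempotent. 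A quick sanity check at $r=2$: there $D_{n,2}=-D_{n,1}$ and $\sigma_{n,1}^2=\sigma_{n,2}^2$, so $T_n=2D_{n,1}^2/\sigma_{n,1}^2\approx 2\chi^2_1$, not $\chi^2_1$; for equal $p_i=1/r$ one gets $T_n\approx \tfrac{r}{r-1}\chi^2_{r-1}$. Thus the sentence ``the $\chi^2_{r-1}$ conclusion follows by the same eigenvalue argument'' is exactly where the proposal breaks. The paper's own proof glosses over this same covariance check, so you are not deviating from its intent --- you have simply located, and then tripped on, the point that the paper's heuristic leaves unverified.
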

\begin{proof}
It is well known that the sum of $n$ independent squared standard normally distributed stochastic variables is $\chi^2_n$ distributed, denoting a $\chi^2$ distribution with $n$ degrees of freedom. From \eqref{eq:10} we see that the sum in \eqref{eq:9} is a sum of approximately squared standard normally distributed stochastic variables and therefore is approximately $\chi^2$ distributed. Knowing $r-1$ terms in the sum, the last term can be computed since the probability estimates sum to one. The sum in \eqref{eq:9} thus is approximately $\chi^2_{r-1}$ distributed
\begin{align}
  \label{eq:11}
  \sum_{i=1}^r  \frac{\left(\hat{p}^{\lambda}_{n,i} - \bar{p}_{n,i}\right)^2}{\text{Var}(\hat{p}^{\lambda}_{n,i} - \bar{p}_{n,i})} \approx \chi^2_{r-1}
\end{align}
Theorem \ref{the:3} follows directly from \eqref{eq:11}.
\end{proof}
Algorithm \ref{alg:2} shows the resulting jump algorithm for the multinomial case.
\begin{algorithm}
\caption{The sample mean with jumps algorithm for the multinomial case.}\label{alg:2}
\INPUT\\
$Y_1, Y_2, Y_3, \ldots$ //Stream of multinomial variables on vector form (recall Eq. \eqref{eq:12})\\
$\lambda$\\
$\alpha$\\
$D$ //How often to perform the test in Theorem \ref{the:2}\\
$N$ //Max number of iterations \\
\METHOD
\begin{algorithmic}[1]
\STATE $\tilde{n} \gets 0$
\STATE $[\hat{p}^{\lambda}_{1,1}, \ldots, \hat{p}^{\lambda}_{1,r}] \gets Y_1$
\STATE $[\tilde{p}_{1,1}, \ldots, \tilde{p}_{1,r}] \gets Y_1$
  \FOR {$n \in 1,2,\ldots,N$}
    \STATE $[\hat{p}^{\lambda}_{n,1}, \ldots, \hat{p}^{\lambda}_{n,r}] \gets \lambda [\hat{p}^{\lambda}_{n,1}, \ldots, \hat{p}^{\lambda}_{n,r}] + (1 - \lambda)Y_n$
    \STATE $\tilde{n} \gets \tilde{n}+1$
    \STATE $[\tilde{p}_{n,1}, \ldots, \tilde{p}_{n,r}] \gets \frac{\tilde{n}-1}{\tilde{n}} [\tilde{p}_{n-1,1}, \ldots, \tilde{p}_{n-1,r}] + \frac{1}{\tilde{n}} Y_n$
    \IF {$n \bmod D == 0$}
      \IF {$\sum_{i=1}^r  \frac{\left(\hat{p}^{\lambda}_{n,i} - \tilde{p}_{n,i}\right)^2}{\text{Var}(\hat{p}^{\lambda}_{n,i} - \tilde{p}_{n,i})} > \chi^2_{r-1,\alpha}$}
        \STATE $[\tilde{p}_{n,1}, \ldots, \tilde{p}_{n,r}] \gets [\hat{p}^{\lambda}_{n,1}, \ldots, \hat{p}^{\lambda}_{n,r}]$
        \STATE $\tilde{n} \gets [1/(1-\lambda)]$
      \ENDIF
    \ENDIF
  \ENDFOR
\end{algorithmic}
\end{algorithm}

\section{Experiments}
\label{sec:exp}

In this Section we evaluate the methodology above for both synthetic and real-life data. In all the experiments reported below, we set $\tilde{p}_n = \hat{p}_n^{\lambda}$ after a jump, i.e. the second alternative discussed in Section \ref{sec:jump} and as given in Algorithms \ref{alg:1} and \ref{alg:2}. We have not found any paper in the literature dealing with tracking the probabilities in binomial and multinomial distributions in abruptly changing environments. The most related papers, in our opinion, are \cite{ross2012} and \cite{gama2004learning}, but they consider a slightly different problem, namely the problem of concept drift. Those methods may be modified to accommodate online estimation as for the devised algorithm in this paper, but we have not looked into that. Therefore, in our experiments we compare the performance of the suggested algorithms in this paper with the SLWE estimator from \cite{OommenRueda06}, i.e. \eqref{eq:2} with constant lambda $\lambda_n = \lambda$ and denote it $\hat{p}_n^{\lambda}$.

%STOPPED HERE
\subsection{Synthetic data example}

We will evaluate the binomial case (Algorithm \ref{alg:1}) and the multinomial case for $r = 4$ classes (Algorithm \ref{alg:2}). Figure \ref{fig:2} shows a comparison of the different estimators for the binomial case when the changes in $p$ are large.
\begin{figure}
  \centering
  \includegraphics[width = 0.5\textwidth]{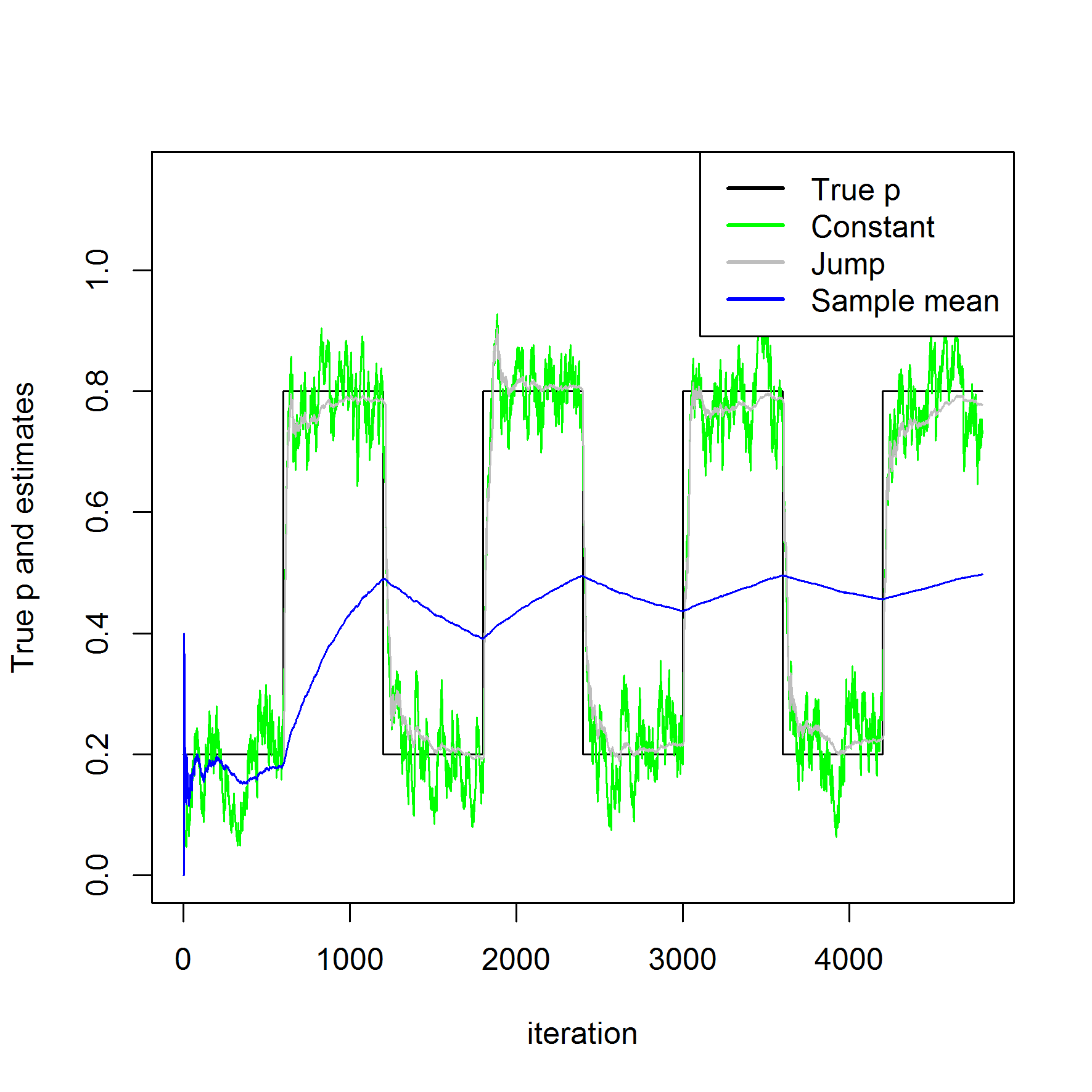}
  \caption{Evaluation of the estimators in an environment with large jumps in $p$. The black curve shows the true $p$ in each iteration. The gray, green and blue curves show the estimators $\tilde{p}_n$, $\hat{p}_n^{\lambda}$ and the sample mean, respectively.}
  \label{fig:2}
\end{figure}
The gray, green and blue curves show the jump estimator ($\tilde{p}_n$), the SLWE with constant $\lambda$ ($\hat{p}_n^{\lambda}$) and the sample mean, respectively. The black curve shows the true value of $p$ in each iteration. We see that the test in Theorem \ref{the:2} detects the changes in $p$ very efficiently such that on average will $\tilde{p}_n$ (gray) perform better then $\hat{p}_n^{\lambda}$. We also see, as expected, that the sample mean is not very useful in a dynamic environment.

Figure \ref{fig:3} shows results from a similar experiment, but where the changes in $p$ are smaller.
\begin{figure}
  \centering
  \includegraphics[width = 0.5\textwidth]{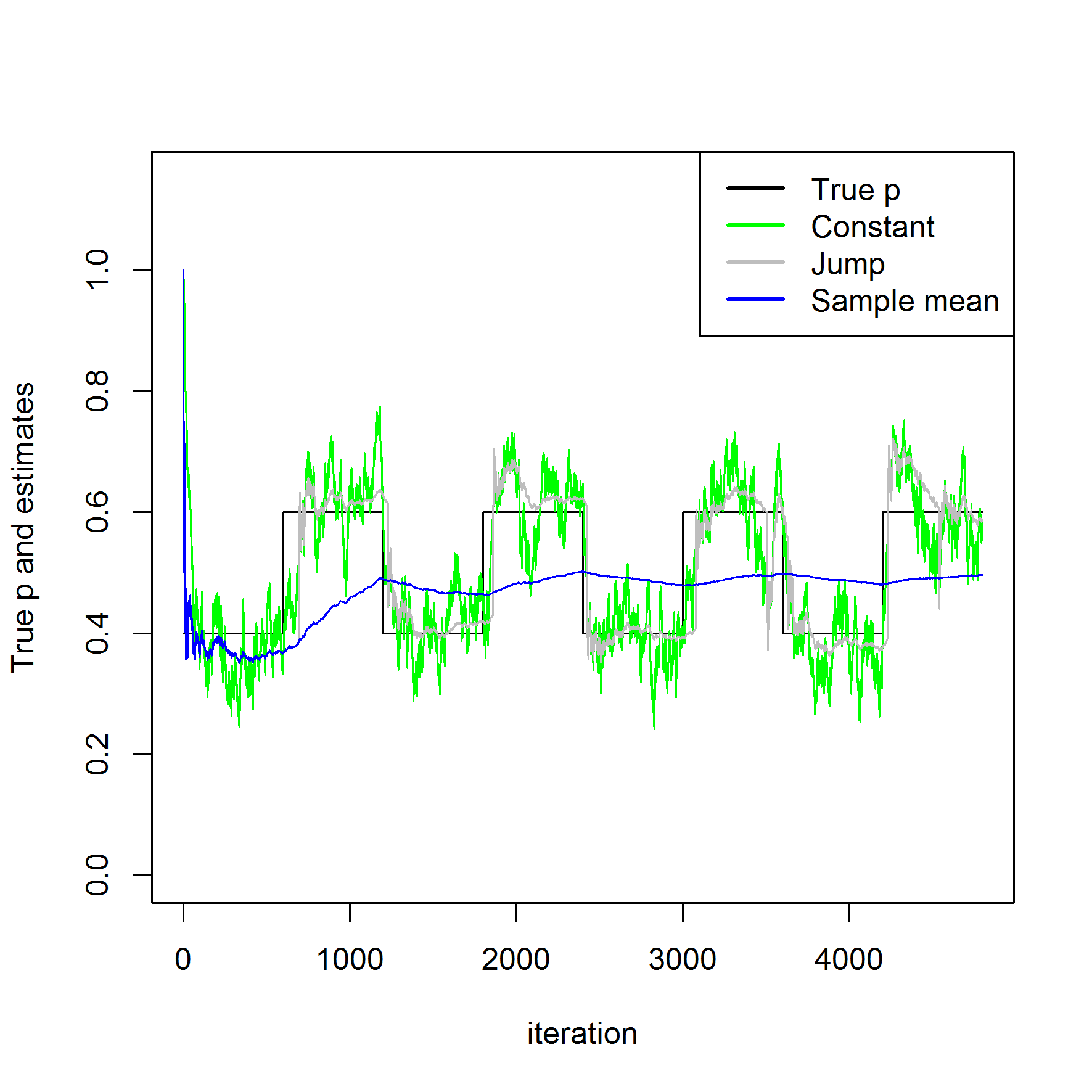}
  \caption{Evaluation of the estimators in an environment with small jumps in $p$. The black curve shows the true $p$ in each iteration. The gray, green and blue curves show the estimators $\tilde{p}_n$, $\hat{p}_n^{\lambda}$ and the sample mean, respectively.}
  \label{fig:3}
\end{figure}
We see that the changes in $p$ also here will be efficiently detected and that $\tilde{p}_n$ performs better than $\hat{p}_n^{\lambda}$. In both experiments above we chose $\alpha = 10^{-3}$, $\lambda = 0.96$ and $\tilde{n} = [1/(1-\lambda)] = 25$.

In Figure \ref{fig:4} we also evaluate the jump estimator for an environment where $p$ is changing smoothly. More specifically the true $p$ changes following a cosine function. For such an environment, it seems like $\tilde{p}_n$ and $\hat{p}_n^{\lambda}$ perform almost equally well. Note that even though the jump estimator is not constructed for such environments, but we see that it still performs well. In this experiment we chose $\alpha = 10^{-2}$, $\tilde{n} = 1$ and still $\lambda = 0.96$
\begin{figure}
  \centering
  \includegraphics[width = 0.5\textwidth]{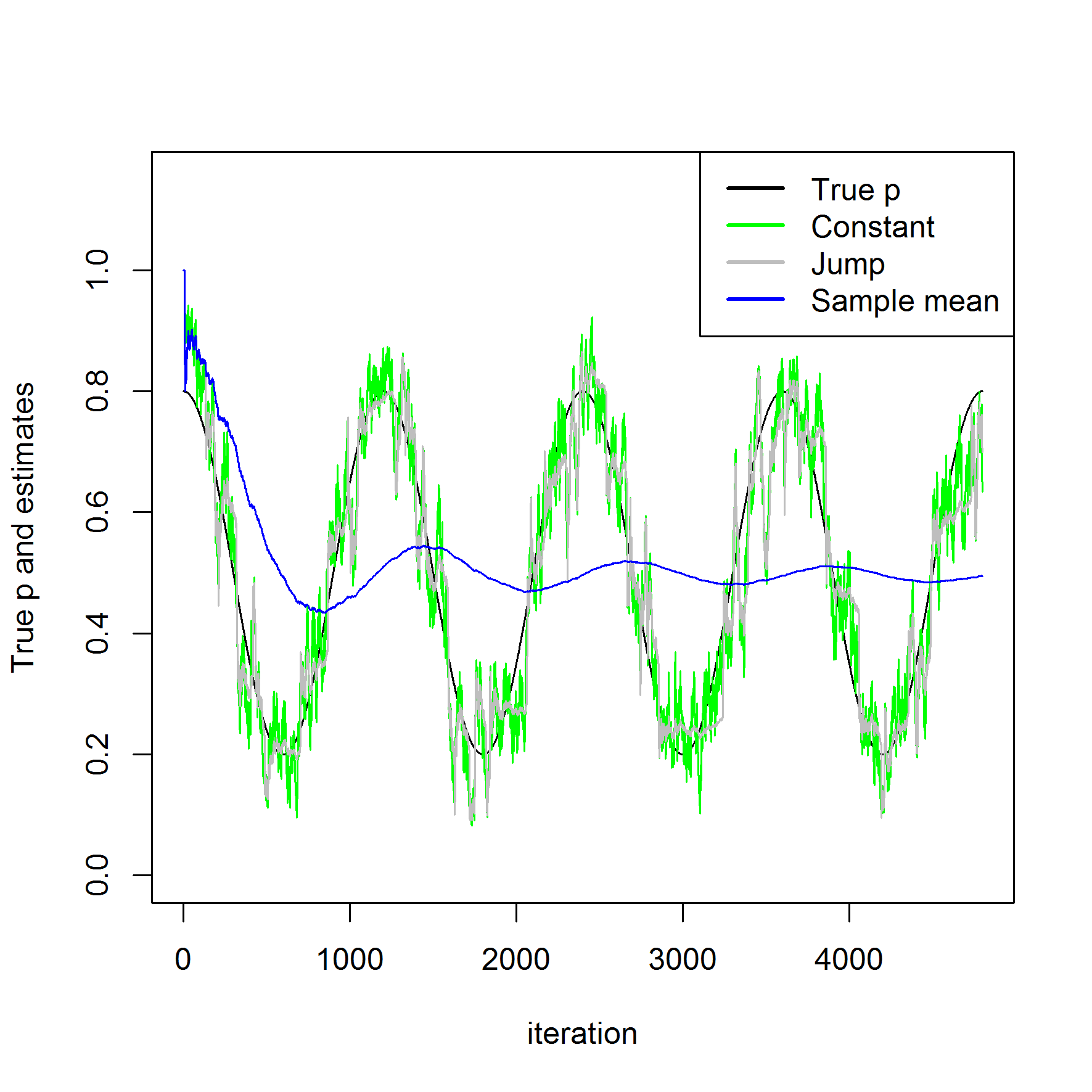}
  \caption{Evaluation of the estimators in an environment where $p$ dynamically is changing. The black curve shows the true $p$ in each iteration. The gray, green and blue curves show the estimators $\tilde{p}_n$, $\hat{p}_n^{\lambda}$ and the sample mean, respectively.}
  \label{fig:4}
\end{figure}

For the jump algorithms described by Algorithms \ref{alg:1} and \ref{alg:2} there are three tuning parameters, namely $\alpha$, $\tilde{n}$ and $\lambda$. We now want to chose values for these parameters such that the estimation error will be as small as possible. We measure the estimation error as the difference in absolute value between the true $p$ and the estimator averaged over all the iterations.

We start by investigating reasonable values for $\alpha$. For the binomial and multinomial algorithm we computed the estimation error for different choices of $z_{\alpha}$ and $\chi^2_{r-1,\alpha}$, respectively. To reduce the Monte Carlo error we ran the Bernoulli and multinomial data streams for  $5\cdot 10^6$ iterations. In the experiments we used $\lambda = 0.95$ and $\tilde{n} = [1/(1-\lambda)] = 20$. We assumed that the system shifted state every $D = 600$ iteration similar to the examples in Figures \ref{fig:2} and \ref{fig:3}.

For the binomial case we considered three different cases.
\begin{itemize}
\item Large changes: $p$ changed with time as shown in Figures \ref{fig:2}.
\item Small changes: $p$ changed with time as shown in Figures \ref{fig:3}.
\item Dynamic: $p$ changed with time as shown in Figures \ref{fig:4}.
\end{itemize}
For the multinomial case we considered two different cases.
  \begin{itemize}
  \item[1.] Every $D = 600$ iterations, we changed the probability vector as follows
    \begin{itemize}
    \item Draw a random number $\rho$ uniformly from $1,2,\ldots,r$
    \item Set $p_{\rho} = 0.8$
    \item Set $p_i = 0.2/(r-1)$ for $i \neq \rho$
    \end{itemize}
    Below we refer to this alternative as 'spike probability'.
  \item[2] Every $D = 600$ iterations, we updated the probability vector as an outcome from the Dirichlet distribution with parameter values $\alpha_1 = 1, \ldots, \alpha_r = 1$. This is referred to as the flat Dirichlet distribution and the probability distribution is uniformly distributed over the simplex of possible probability vectors, i.e. the vectors satisfying $p_1, p_2, \ldots, p_r > 0$ and $\sum_{i=1}^r p_i = 1$. Below we refer to this alternative as 'flat Dirichlet'.
  \end{itemize}
The results are shown in Figure \ref{fig:5}.
\begin{figure}
  \begin{tabular}{cc}
  \includegraphics[width = 0.5\textwidth]{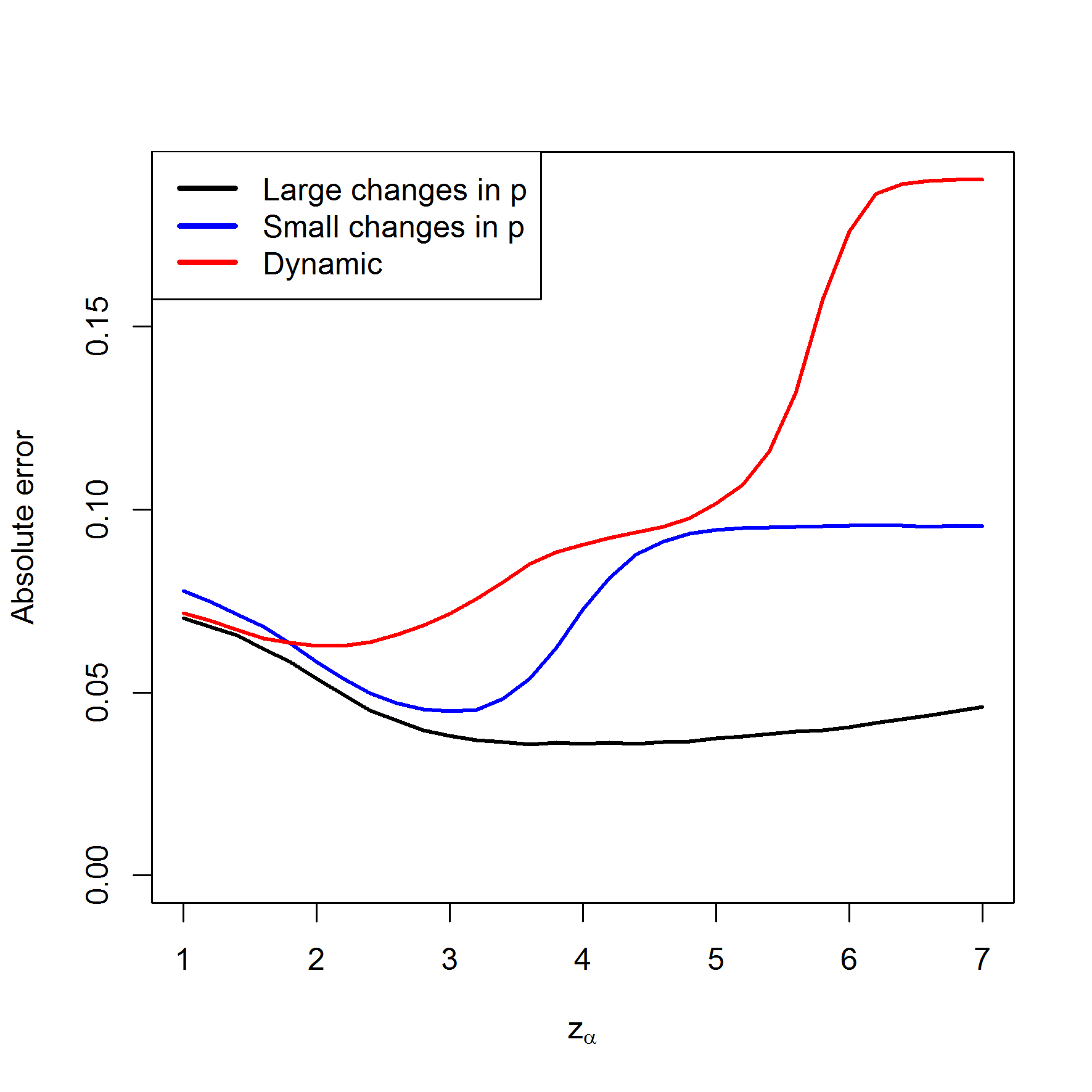} & \includegraphics[width = 0.5\textwidth]{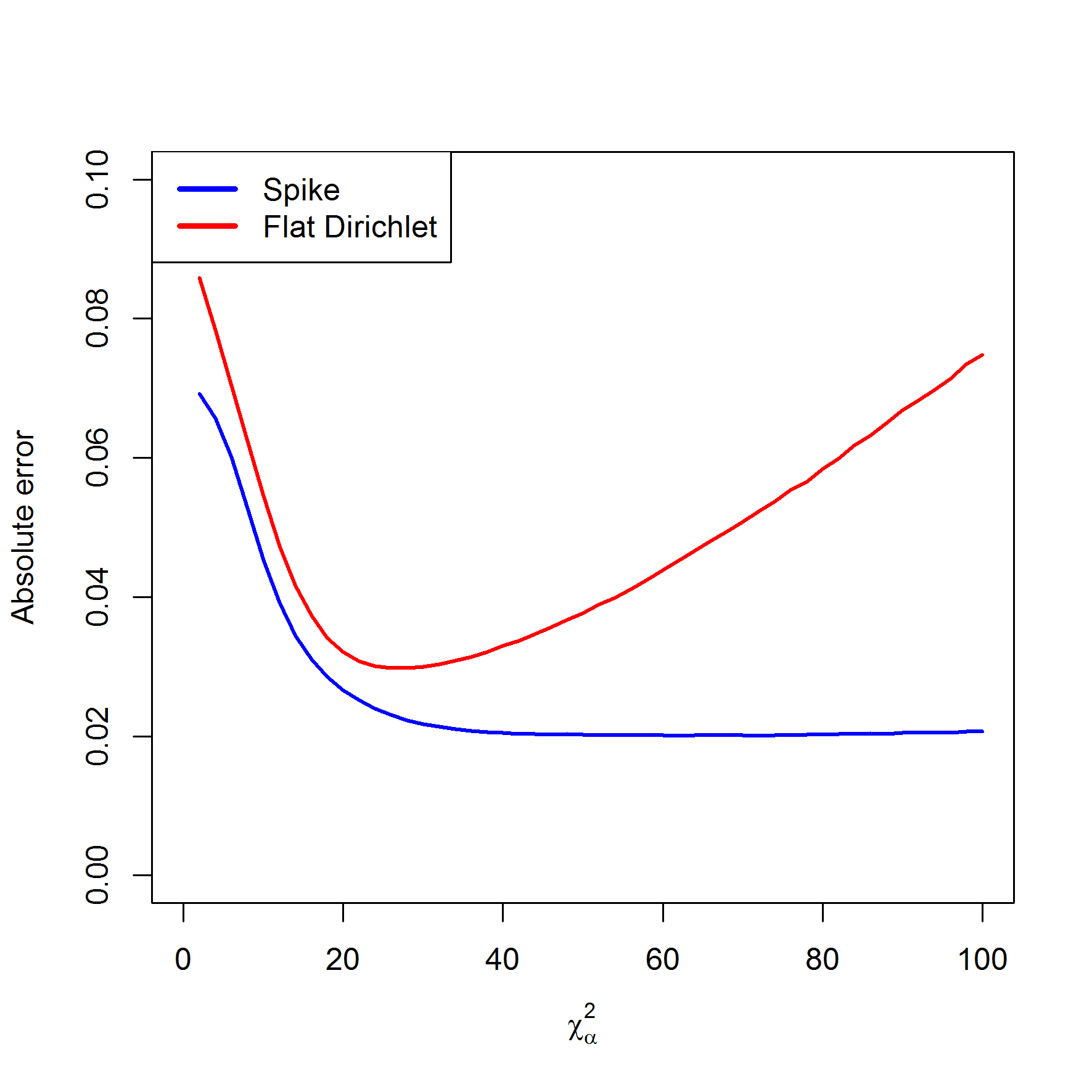}
  \end{tabular}
  \caption{The left and and right panels show estimation error as a function of $z_{\alpha}$ (binomal case) and $\chi^2_{r-1.\alpha}$ (multinomial case), respectively. For the left panel, the black, blue and red curves refer to experiments where the changes in $p$ are large, small and dynamic. For the right panel, the blue and red curves refer to experiments where the changes in $p$ are based on spike probability and flat Dirichlet alternatives, respectively.}
  \label{fig:5}
\end{figure}
We start discussing the binomial case. For the blue curve in the left panel of Figure \ref{fig:5} we see that an optimal value of $z_{\alpha}$ is about 3 which is equivalent to $\alpha \approx 10^{-3}$. By choosing smaller values of $z_{\alpha}$, the test will too often wrongly detect changes. Choosing a too high value of $z_{\alpha}$, the test will detect changes in $p$ too late or never. With $z_{\alpha}$ above 5, the test will never detect the changes in $p$ and we reach a limit in the estimation error which is equal to the estimation error using the sample mean. When the changes in $p$ are large (black curve), we can allow using higher values of $z_{\alpha}$ since we still are able to detect the large changes in $p$. An optimal value for $z_{\alpha}$ is around 4. Choosing an even higher value of $z_{\alpha}$ slightly reduces the performance because the method uses a few iterations more before detecting that $p$ has changed value. For the dynamic system, we see that the estimation error is higher and is as expected since the method in this paper is not directly constructed for such environment. We see that the optimal value for $z_{\alpha}$ is around 2.2 which seems reasonable. Since $p$ continuously is changing value, $z_{\alpha}$ should not be chosen too high to be able to keep track of these changes.

For the multinomial case (right panel), we see that for the flat Dirichlet alternative, an optimal value is $\chi^2_{r-1,\alpha} \approx 25$ which is equivalent to $\alpha \approx 10^{-5}$. For the spike probability alternative any value of $\chi^2_{r-1,\alpha}$ between 25 and 100 perform well. We see that a lower value of $\alpha$ performs the best in the multinomial cases compared to the binomial cases. The reason is that it is easier to detect a change in the probability vector in the multinomial case compared to  the binomial case.

We turn our attention now to evaluating the optimal values for $\tilde{n}$. The results are shown in Figure \ref{fig:6}. Also in this experiment we set $\lambda = 0.95$. Further we sat $z_{\alpha} = 3$ and $\chi^2_{r-1,\alpha} \approx 25$.
\begin{figure}
  \centering
  \begin{tabular}{cc}
  \includegraphics[width = 0.5\textwidth]{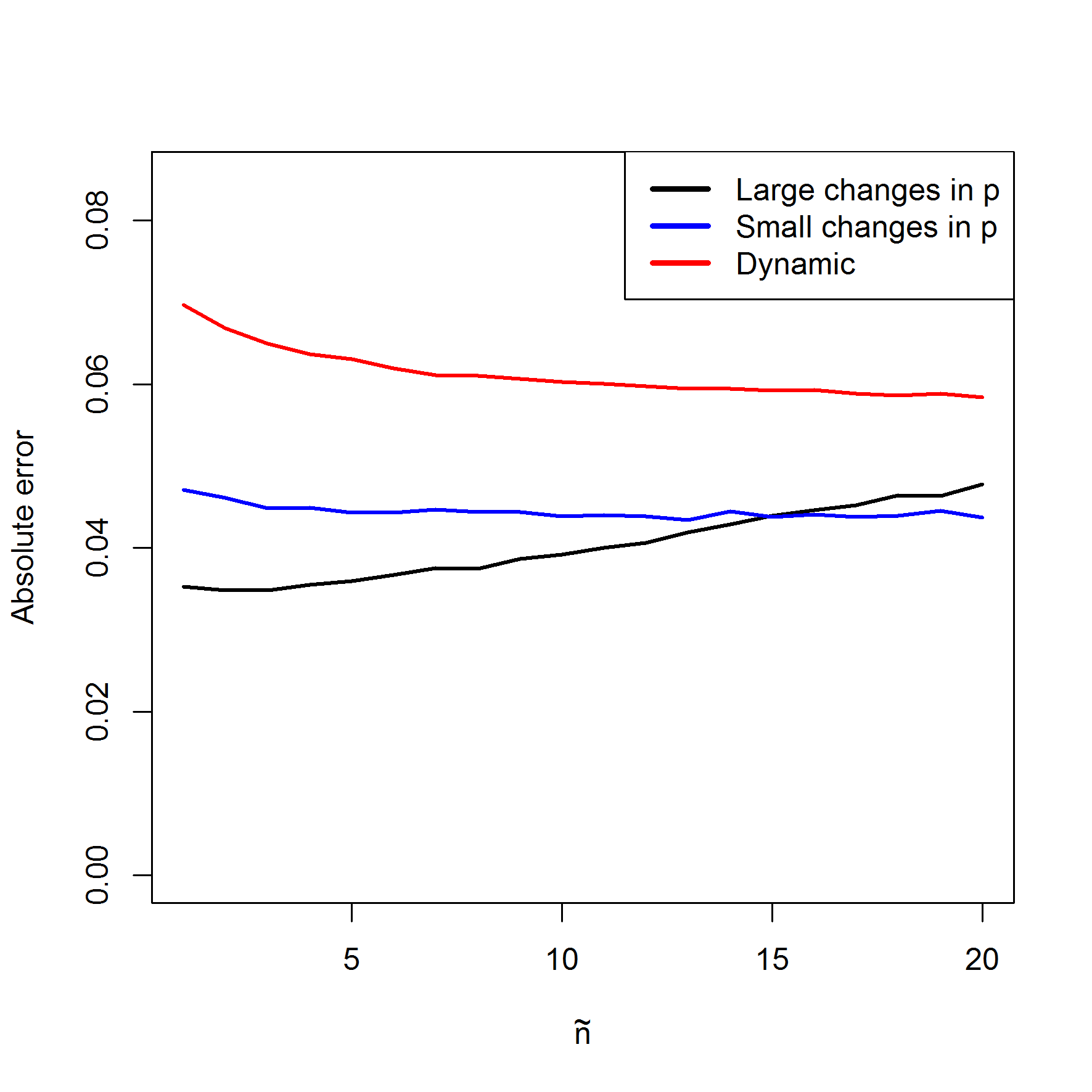} & \includegraphics[width = 0.5\textwidth]{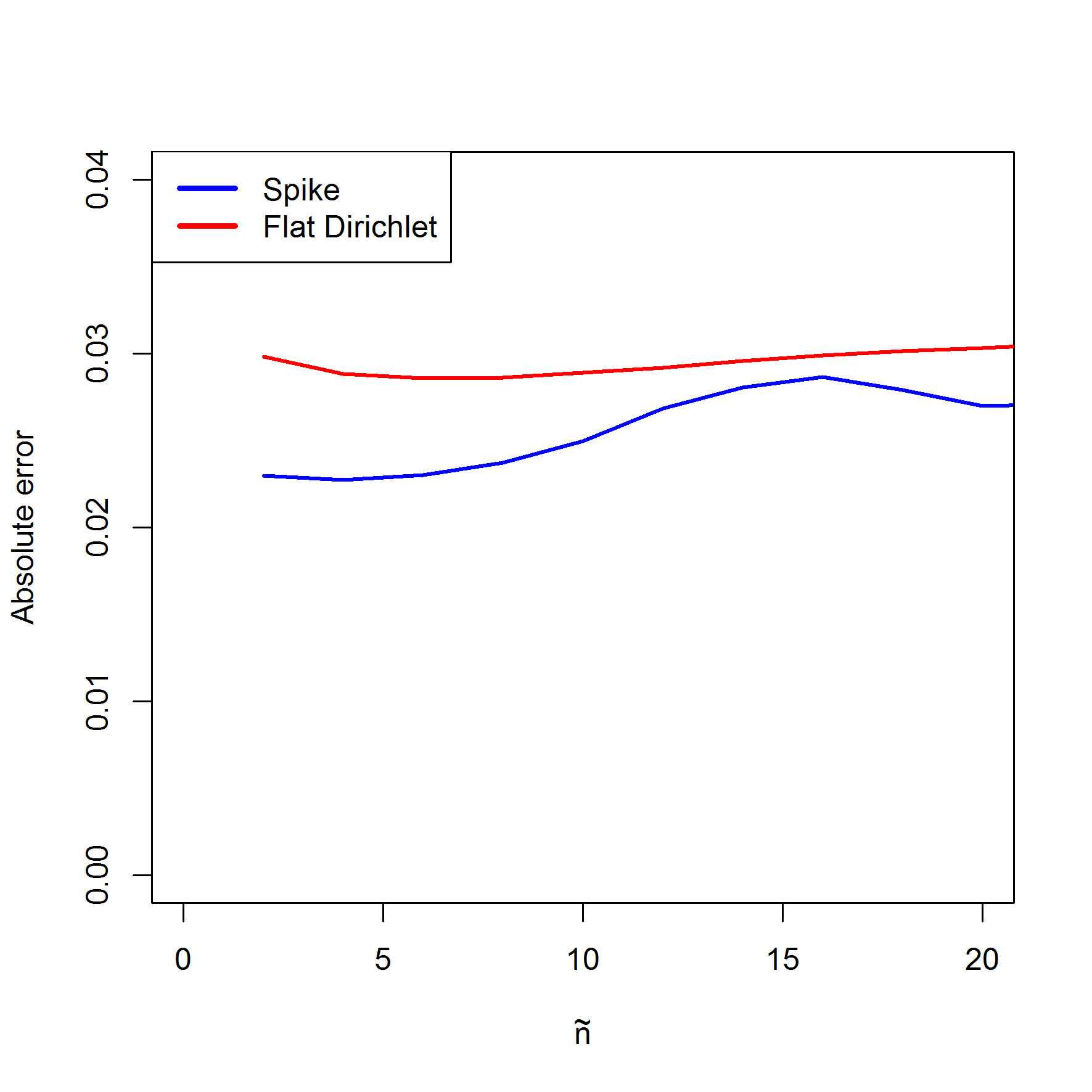}
  \end{tabular}
  \caption{Estimation error as a function of $\tilde{n}$. For the left panel, the black, blue and red curves refer to experiments where the changes in $p$ are large, small and dynamic. For the right panel, the blue and red curves refer to experiments where the changes in $p$ are based on spike probability and flat Dirichlet alternatives, respectively.}
  \label{fig:6}
\end{figure}
Overall we see that the estimation error does not depend strongly on the choice of $\tilde{n}$. Please note that the increase in estimation error for $\tilde{n} \approx 15$ for the spike probability alternative is an actual effect and not Monte Carlo error.
%We only observe that if we expect large changes in $p$, we should choose a large value of $\tilde{n}$. This means that when the changes in $p$ are large, $\hat{p}_n$ is a reliable estimate of $p$ when the method need to jump.

Finally we investigate how the estimation error depends on the choice of $\lambda$. We evaluate both the jump estimator $\tilde{p}_n$ and the estimator using a constant $\lambda$, i.e. the original SLWE $\hat{p}_n^{\lambda}$. The results are shown in Figure \ref{fig:7}.
\begin{figure}
  \centering
  \begin{tabular}{cc}
  \includegraphics[width = 0.5\textwidth]{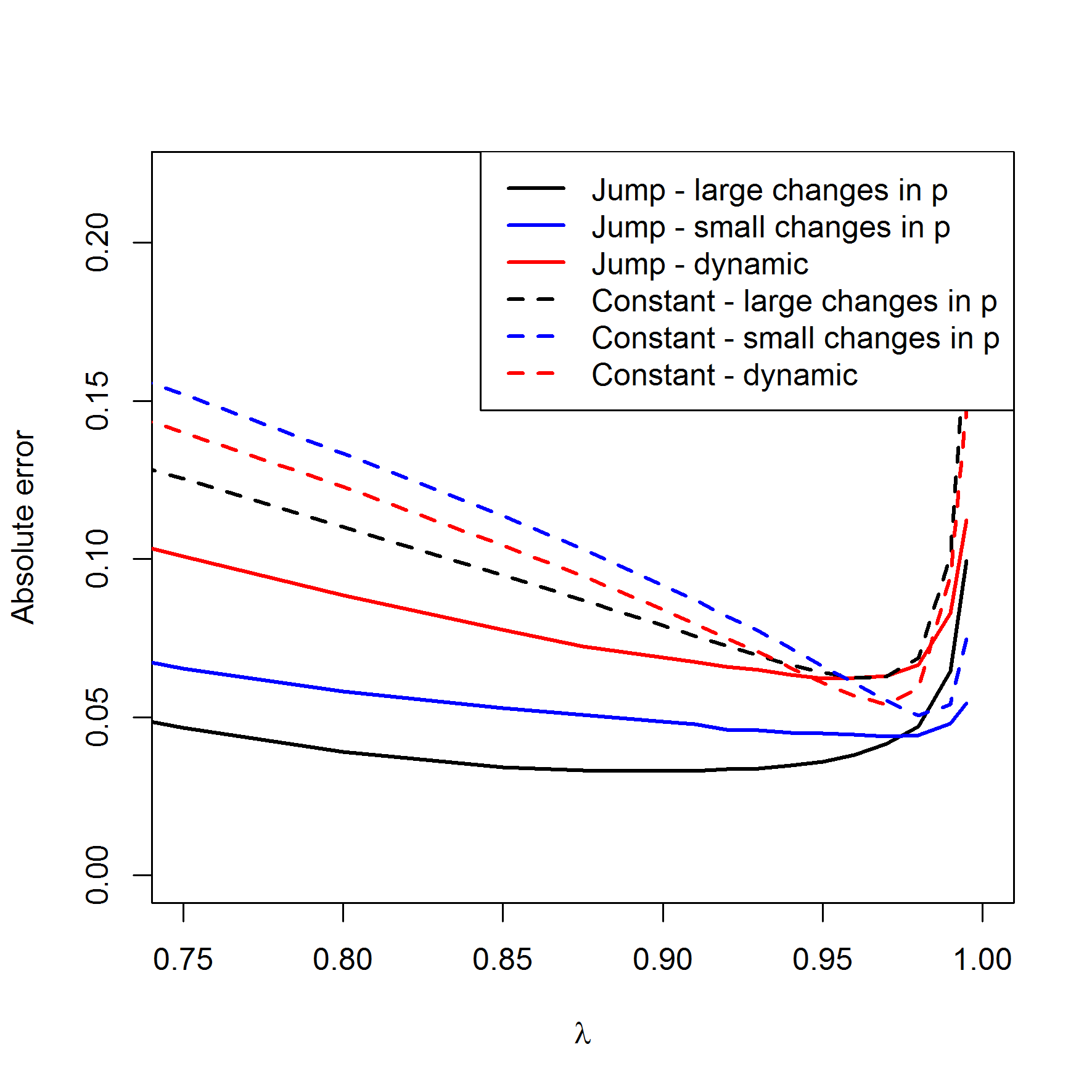} & \includegraphics[width = 0.5\textwidth]{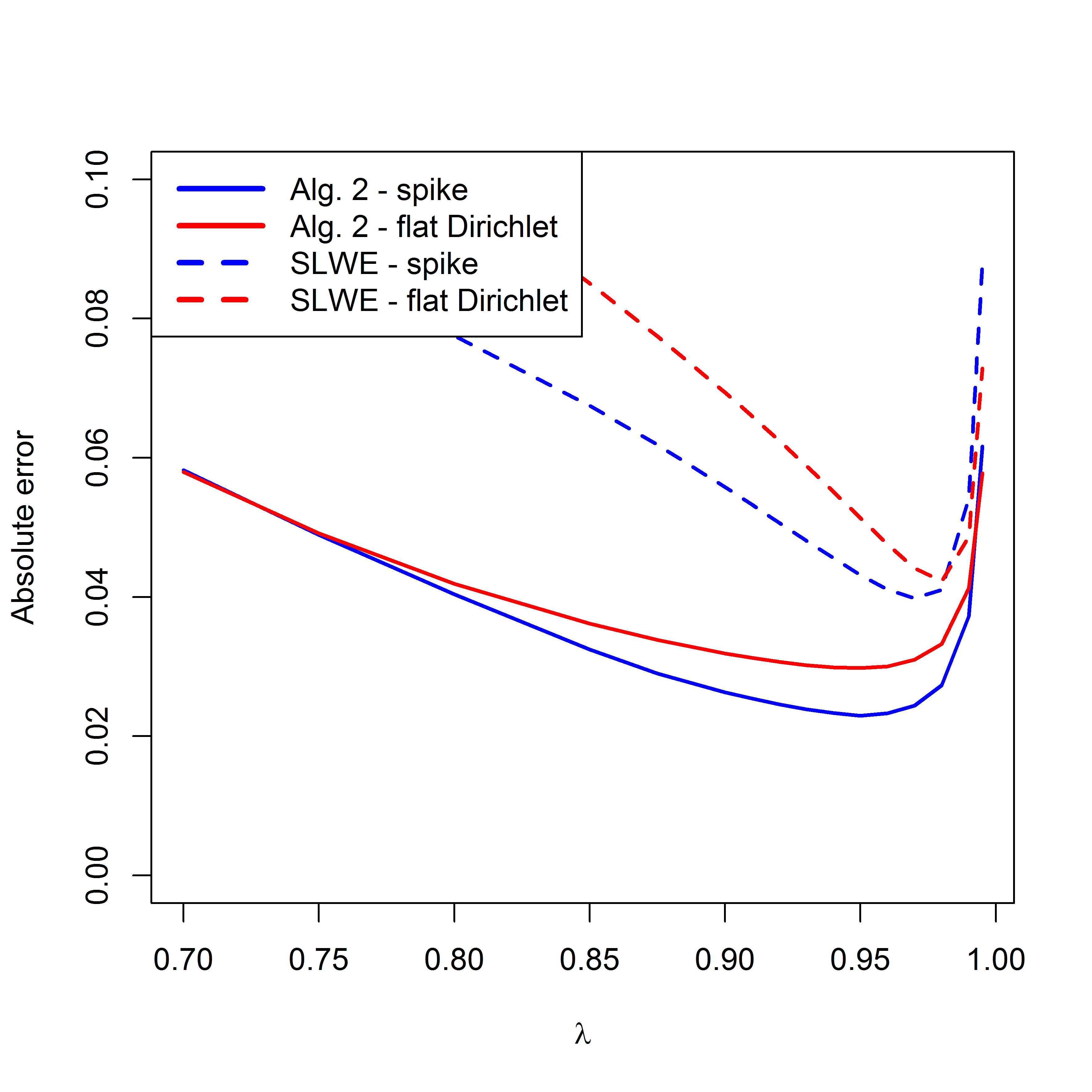}
  \end{tabular}
  \caption{Estimation error as a function of $\lambda$. The solid and dashed lines refer to the estimators $\tilde{p}_n$ and $\hat{p}_n^{\lambda}$, respectively. For the left panel, the black, blue and red curves refer to experiments where the changes in $p$ are large, small and dynamic. For the right panel, the blue and red curves refer to experiments where the changes in $p$ are based on spike probability and flat Dirichlet alternatives, respectively.}
  \label{fig:7}
\end{figure}
We start by discussing the binomial case (left panel). Comparing the solid and dashed black curves we see that $\tilde{p}_n$ outperforms  $\hat{p}_n^{\lambda}$ with a large margin. We also observe that for $\hat{p}_n^{\lambda}$ an optimal value of $\lambda$ is about 0.96. We also observe that the optimal value for $\lambda$ for the jump estimator is about 0.9. Recall that this is the $\lambda$ we should choose for the weak estimator with constant $\lambda$ that runs in parallel with the sample mean. This difference may come as a surprise, but remember that the purpose of the weak estimator with constant $\lambda$ is different in these two cases. For $\hat{p}_n^{\lambda}$ (original SLWE) we chose $\lambda$ to minimize the estimation error. For the jump estimator, we chose $\lambda$ to detect changes in $p$ as fast as possible to rapidly perform a jump. When the changes in $p$ are small, we also observe that $\tilde{p}_n$ outperforms $\hat{p}_n^{\lambda}$ for all choices of $\lambda$ (blue curves). When $p$ is changing dynamically (red curves), the picture is, as expected, less clear and which estimator that performs the best depends on the choice of $\lambda$.

For the multinomial case (right panel), we see that the jump algorithm (Algorithm \ref{alg:2}) outperforms the multinomial SLWE with a large margin for both the spike probability and the flat Dirichlet alternatives.

From both panels in Figure \ref{fig:7}, we see that for all the cases the performance of the jump estimator $\tilde{p}_n$ is less sensitive on the choice of $\lambda$ compared to the SLWE with constant $\lambda$. Said in another way, the jump estimator, $\tilde{p}_n$, performs well for a large range of different choices of $\lambda$ while for the SLWE with constant $\lambda$, $\hat{p}_n^{\lambda}$, performs well only for a small interval of values for $\lambda$. This is a very nice property of the jump estimator since in practical situations we do not know what is an optimal value for $\lambda$.

\subsection{Real-life data example}
%STOPPED HERE
In this section, we investigate the problem of tracking topics or sentiment in online streams of text. Examples of such text streams could be online discussion threads and news/social media feeds like Twitter. A popular approach is to use keyword lists like sentiment lexicons. A keyword list is a set of words for each topic or sentiment type (for example: happy, sad, angry, etc). Such an approach is usually more robust to domain changes than machine learning approaches \cite{liu2012sentiment} which makes the keyword approach ideal for online tracking of topics or sentiment in discussion threads and news/social media feeds.

%The most used approach to automatically classify text into different classes like topics or sentiment is to train a machine learner. There are however several disadvantages with the machine learning approach. We need a beforehand annotated text material to be able to train the machine learner. If we suppose that such annotated material is available, the resulting classifier is very sensitive to domain changes \cite{liu2012sentiment}. If the properties of the annotated material and the material where the machine learner is applied differs, the performance of the machine learner typically drops dramatically. When tracking online discussion threads or news/social media feed, the properties of the streaming text typically varies with time, which renders it hard to use a machine learner successfully over time.

In the experiment in this section we consider the problem of online tracking of the current topic in a news feed. We assumed four topics, namely news about the European Union (EU), news about economy, sports and entertainment. We collected a large set of news articles about the four topics from the popular Norwegian online news paper site \texttt{vg.no}. We assumed that the instants when the text changed between the different topics were unknown to our algorithm. The task was to track the probabilities that the current topic is EU, economy, sports or entertainment.

We now want to apply the algorithms described in this paper for the topic tracking problem. We started by generating a keyword list for each of the four topics. The keyword list for a given topic were generated by choosing words that had a high Pointwise mutual information to the given topic \cite{manning1999foundations}.
%%STOPPED HERE
We assumed that we received one word at the time from the news feed and every time we received a new word, we updated our probability estimates that the current topic were EU, economy, sports or entertainment. If the current word received from the news feed was part of the EU keyword list, we can think of this as an outcome '1' from a multinomial distribution. If the word was part of the economy keyword list, we can think of this as an outcome '2' from a multinomial distribution and so on. Using the weak estimator in equation \eqref{eq:7}, we can now update our estimate of the probability vector, namely the probabilities that the current topic is EU, economy, sports or entertainment. Similarly we can update the estimate of the probability vector using the jump algorithm in Algorithm \ref{alg:2}. All words that were not part of any of the keyword lists were removed from the text corpus.

A natural offline way to estimate of the probability that the current topic was EU (economy, sports, entertainment) based on the keyword lists was to compute the portion of all the keywords in an article that were EU (economy, sports, entertainment) keywords. We denote this the offline approach and can be seen as the optimal estimates for the probability of the different topics based on the keyword lists. In an online setting it is not possible to compute the offline estimates, but ideally we want the online estimators in \eqref{eq:7} and in Algorithm \ref{alg:2} to be as close as possible to the optimal offline estimates. We compare the performance of the online estimators in this paper by measuring how close they were to the optimal offline approach. When performing the experiments we ran a two fold cross validation where we used half of the articles to compute the keyword lists and the other half to track the probabilities that the current topic was EU, economy, sports or entertainment. Next, we switched and trained and tested in the opposite direction.

%To generate the keyword lists we used the popular Pointwise Mutual Information criterion \cite{manning1999foundations}. For a given word, $w_i$, we computed the mutual information with the two topics
%\begin{align*}
%  \text{PMI}(w_i, \text{EU}) &= \frac{P(w_i \,|\, \text{EU})}{P(w_i)} \\
%  \text{PMI}(w_i, \text{Sports}) &= \frac{P(w_i \,|\, \text{Sports})}{P(w_i)}
%\end{align*}
%where $P(w_i \,|\, \text{EU})$ refer to the probability to get $w_i$ when randomly selecting a word from the EU articles while $P(w_i)$ refer to the same when selecting a word from all the articles (EU and Sports 5articles). In other words, we get a high value of $\text{PMI}(w_i, \text{EU})$ if a word occurs more often in the EU articles than in the Sports articles. We computed the EU keyword list by selecting all words with a 5$\text{PMI}(w_i, \text{EU}) > 1.3$ and similarly we computed the Sports keyword list based on $\text{PMI}(w_i, \text{Sports}) > 1.3$. We only included words to the keyword lists that occurred at least two times in the %training articles.  The resulting EU and Sports keyword lists consisted of about 2000 words. The exact number of words in the keyword lists depended on the which articles that were used to generate the lists.

Figure \ref{fig:8} shows the tracking of the probabilities for the different topics
\begin{figure}
  \centering
  \includegraphics[width = \textwidth]{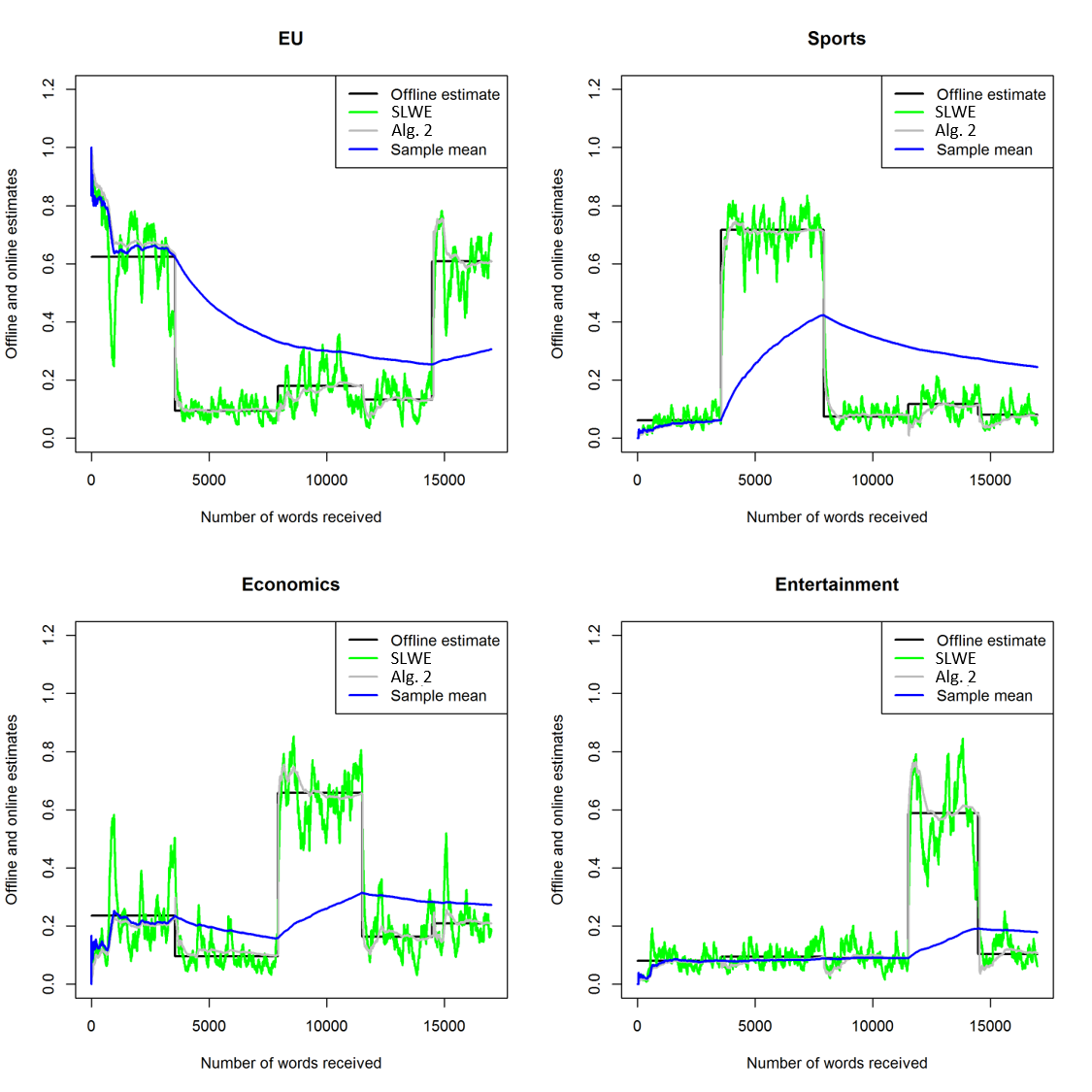}
  \caption{The panels from upper left to bottom right show the tracking of the probabilities that the current topic is EU, economy, sports or entertainment, respectively. The black curves shows the offline estimate every time a new word is received. The gray, green and blue curves show the jump estimator, the SLWE estimator and the sample mean, respectively.}
  \label{fig:8}
\end{figure}
We see that the jump estimator in Algorithm \ref{alg:2} adapts faster when the text stream changes topic and also tracks the offline estimates more efficiently in the stationary parts than the SLWE with constant $\lambda$. The jump estimator performed well for a large range of values for $\alpha$ and $\lambda$ but the best results were achieved using $\chi^2_{3,\alpha} \approx 30$ and $\lambda = 0.96$. The SLWE estimator, $\hat{p}_n^{\lambda}$, performed the best using $\lambda = 0.99$. The mean absolute estimation error compared to the offline estimator where 0.0235 and 0.0505 for the jump and the SLWE estimators, respectively, which means that the jump estimator clearly outperforms the SLWE estimator for this application.

\section{Closing remarks}
\label{sec:closrem}

In this paper we have constructed an estimation procedure that combines the strengths of a weak estimator with constant $\lambda$ and and decreasing $\lambda$ (sample mean). We have developed a hypothesis test procedure to rapidly detect a change in the underlying $p$. Further we have proposed an efficient procedure to jump to a new estimate when a change is detected. The experiments show that the procedure efficiently detects changes in the underlying distribution and outperforms the original SLWE with constant $\lambda$ with a large margin.

The experiments also showed that the performance of the jump estimator $\tilde{p}_n$ is less sensitive to the choice of $\lambda$ compared to the SLWE with constant $\lambda$. Said in another way, the jump estimator performs well for a large range of different choices of $\lambda$ while the SLWE with constant $\lambda$, $\hat{p}_n^{\lambda}$, performed well only for a small range of choices for $\lambda$. This is a very attractive property of the jump estimator since in practical situations we do not know what is an optimal value for $\lambda$.

\bibliographystyle{plain}
\bibliography{ParallelEstimBib}

\end{document}